\documentclass[1p]{article}


\usepackage{graphicx}
\usepackage{amssymb,amsmath}
\usepackage{pstricks,psfrag}
\usepackage{url}

\long\def\invis#1{}

\newtheorem{cor}{Corollary}
\newtheorem{theorem}{Theorem}
\newtheorem{lem}{Lemma}
\newtheorem{prop}{Proposition}
\newtheorem{proper}{Property}

\newcommand{\corref}[1]{Corollary~\ref{cor:#1}}
\newcommand{\figref}[1]{Figure~\ref{fig:#1}}

\newcommand{\propref}[1]{Proposition~\ref{prop:#1}}
\newcommand{\properref}[1]{Property~\ref{proper:#1}}
\newcommand{\secref}[1]{Section~\ref{sec:#1}}
\newcommand{\thmref}[1]{Theorem~\ref{thm:#1}}
\renewcommand{\eqref}[1]{(\ref{eq:#1})}

\newcommand{\all}{\mathrm{all}}
\newcommand{\row}{\mathrm{row}}
\newcommand{\col}{\mathrm{col}}

\newcommand{\probexist}{{\sc Existence}}
\newcommand{\probmax}{{\sc MaxInfeasible}}
\newcommand{\probmin}{{\sc MinUniversal}}
\newcommand{\MINISAT}{\textsc{MiniSat}}
\newcommand{\SUGAR}{\textsc{Sugar}}

\newenvironment{proof}{\medskip
\noindent{\bf Proof:}}{\quad $\Box$\medskip}

\sloppy

%
%
\setlength{\oddsidemargin}{22pt}         
\setlength{\evensidemargin}{22pt}        
\setlength{\headheight}{12pt}            
\setlength{\textheight}{662pt}           
\setlength{\marginparsep}{10pt}          
\setlength{\footskip}{30pt}              
\setlength{\hoffset}{-13pt}              
\setlength{\paperwidth}{597pt}           
\setlength{\topmargin}{20pt}             
\setlength{\headsep}{25pt}               
\setlength{\textwidth}{427pt}            
\setlength{\marginparwidth}{106pt}       
\setlength{\marginparpush}{5pt}          
\setlength{\voffset}{-37pt}              
\setlength{\paperheight}{845pt}          



\title{On A Generalization of ``Eight Blocks to Madness''}
\author{Kazuya Haraguchi}
\date{Faculty of Commerce, Otaru University of Commerce, Japan\\ \tt{haraguchi@res.otaru-uc.ac.jp}
}

\begin{document}
\maketitle

\begin{abstract}
    We consider a puzzle such that a set of colored cubes is given as an instance. 
    Each cube has 
    unit length on each edge 
    and its surface is colored so that 
    what we call the Surface Color Condition is satisfied.
    Given a palette of six colors, the condition requires that each face should have 
    exactly one color and all faces should have different colors from each other. 
    The puzzle asks to compose a $2\times 2\times 2$ cube
    that satisfies the Surface Color Condition 
    from eight suitable cubes in the instance. 
    Note that cubes and solutions have 30 varieties respectively. 
    In this paper, we give answers to three problems on the puzzle:
    {\bf (i)} For every subset of the 30 solutions,
    is there an instance that has the subset exactly
    as its solution set?
    {\bf (ii)} Create a maximum sized infeasible instance (i.e., one having no solution). 
    {\bf (iii)} Create a minimum sized universal instance (i.e., one having all 30 solutions).
    We solve the problems with the help of a computer search. 
    We show that the answer to (i) is no.  
    For (ii) and (iii), 
    we show examples of the required instances,
    where their sizes are 23 and 12, respectively. 
    The answer to (ii) 
    solves one of the open problems
    that were raised in [E. Berkove et al., ``An Analysis of the (Colored Cubes)$^3$ Puzzle,'' {\em Discrete Mathematics\/}, {\bf 308\/} (2008) pp.~1033--1045]. 
\end{abstract}

\invis{
\begin{keyword}
Combinatorial puzzle\sep
MacMahon's colored cube puzzle\sep
Eight Blocks to Madness\sep
Computer-assisted proof 
\end{keyword}
}


\section{Introduction}
\label{sec:intro}

Various kinds of mathematical puzzles have been invented, 
which provide us with not only recreation but also interesting research problems. 
Many types of puzzles such as pencil puzzles~\cite{andersson2009hashiwokakero,DemaineOUU13,HO.2013,kolker2012kurodoko,YS.2003} 
and even video games~\cite{ADGV.2014,DHL.2003,W.2014} 
have been shown to be NP-hard,
which seems to be a common property that
attractive puzzles should have. 
Hearn and Demaine collected
major complexity results in \cite{hearn2009games}. 

Puzzle instance creation 
is one of the possible future research directions 
in mathematical puzzles. 
Ordinary people do not have interest in computational complexity.
They just like to play more and more addictive and challenging puzzle instances. 
Hence it is meaningful to exploit how to create puzzle instances that have desired properties. 
Nevertheless, 
fairly little literature deals with puzzle instance creation
(e.g., {\sc Futoshiki}~\cite{H.2013} and {\sc BlockSum}~\cite{HAM.2012}). 
%
This is due to its difficulty.
Creating a puzzle instance is harder than just solving it in general
because creation process involves solving. 
To create an instance, we need not only to check whether a candidate instance is solvable
but also to examine whether it has desired properties (e.g., solution uniqueness)
or even to search the instance space for better candidates.

With these in mind, 
we explore how to create an instance
for simple puzzles. 
In this paper, we take up a certain generalization of
a classical puzzle called ``Eight Blocks to Madness.''
We present a collection of our research results on this puzzle.  
Specifically, we give answers to three problems named
\probexist, \probmax, and \probmin\  
that are described below.

In the puzzle that we consider, we deal with colored cubes. 
Suppose that a palette of six colors is given.
A puzzle solver is given a set of $m$ colored cubes as an {\em instance\/}.
Each 
cube 
has unit length on each edge 
and the surface is colored so that the following
{\em Surface Color Condition\/} is satisfied.
\begin{description}
\item[Surface Color Condition:]
  {\em Each face 
    has 
    exactly one color
    and all faces
    have 
    different colors from face to face.\/}
\end{description}
Hence, 30 {\em varieties\/} are possible for colored cubes.
The instance may contain multiple cubes for some varieties
or may contain no cubes for other varieties. 
The puzzle asks to compose a $2\times2\times 2$ cube 
by using eight suitable cubes in the instance
so that the $2\times2\times2$ cube satisfies the Surface Color Condition.  
We call a $2\times 2\times 2$ cube whose surface is colored in that way a {\em solid\/}. 
Of course there are 30 varieties of solids.
Once an instance is given, 
the solids that can be constructed 
are uniquely determined. 
Note that all the 30 varieties do not necessarily appear.

To create a puzzle instance, we fix the solution set at first 
and then search for an instance that has this solution set,
as has been done in creation of pencil puzzle instances~\cite{H.2013,HAM.2012}. 
Let us denote the set of all the 30 varieties by $V_\all$
and an arbitrary subset of $V_\all$ by $V$. 
An instance is {\em $V$-generatable\/} 
if a solid of any variety in $V$ can be composed and 
no solid of any variety out of $V$ can be composed anyhow. 
In our creation process, 
we first decide a subset $V$ of $V_\all$ 
and then construct a $V$-generatable instance. 
\invis{
Specifically, we first decide a subset $V\subseteq V_\all$ of varieties
and then construct a  {\em $V$-generatable\/} instance,
that is, a set so that 
a solid of any variety $v\in V$ can be composed and 
no solid of any variety $v'\in V_\all\setminus V$ can be composed anyhow. 
}
We come up with the following problem. 
\begin{description}
  \item[\underline{Problem \probexist}]
  Is there a $V$-generatable instance 
  for all possible
  $V\subseteq V_\all$?
\end{description}
We show that, unfortunately, the answer is no. 
This suggests that a puzzle creator should choose 
the solution set $V$ appropriately
so that a $V$-generatable instance exists. 

Next, we observe two extreme cases:
$V=\emptyset$ and $V=V_\all$.
Let us call an $\emptyset$-generatable instance {\em infeasible\/} and
a $V_\all$-generatable instance {\em universal\/}. 
It is easy to construct examples of these types of instances. 
For example, 
an instance containing seven or fewer cubes is infeasible. 
An instance that contains eight or greater cubes for all the 30 varieties is universal. 
It is interesting to explore a larger infeasible instance
and a smaller universal instance
since, intuitively, 
the more cubes and the more varieties an instance contains,
the more varieties of solids are expected to be composable.  
The following problems arise. 
\begin{description}
\item[\underline{Problem \probmax}]
  Create an infeasible instance of the maximum size. 
\end{description}
\begin{description}
\item[\underline{Problem \probmin}]
  Create a universal instance of the minimum size. 
\end{description}
We give answers to these problems by
showing examples of a maximum infeasible instance
and a minimum universal instance,
where their sizes are 23 and 12, respectively. 
We emphasize that \probmax\ solves 
one of the open problems in \cite{Berkove.2008}.

To tackle these problems,
we utilize the power of a computer search. 
Our approach can be regarded as ``computer-assisted proof,''
which has been used to solve various significant problems in
discrete mathematics;
e.g., Four color theorem~\cite{AH.1977-1,AH.1977-2,AHK.1977}, 
Sudoku critical set~\cite{MTC.2013}. 
We formulate the problems
by {\em Constraint Programming\/} ({\em CP\/}) or
by {\em Integer Programming\/} ({\em IP\/}). 
Once the models are established,
we may be able to solve the problems
by using state-of-the-art computational softwares.

The paper is organized as follows. 
We review previous work related to the puzzle in \secref{rel}
and prepare terminologies and notations in \secref{prel}. 
We present the answers to the three problems in \secref{answer},
along with their CP or IP formulations.
We conclude the paper in \secref{conc},
describing open problems. 

\section{Related Work}
\label{sec:rel}

The original version of ``Eight Blocks to Madness''
was invented by Irishman Eric Cross in 1960's 
and was issued by Austin Enterprises of Ohio~\cite{ROB}.  
The puzzle instance has exactly eight cubes
that are of different varieties from each other. 
Thus a solver has only to arrange the eight given cubes into a solid. 
Kahan~\cite{Kahan.1972} showed that the solid variety composable from this original instance is unique. 
Sobczyk~\cite{Sobczyk.1974} studied a slight generalization of
the original puzzle so that an instance
can be a set of {\em any\/} eight cubes,
repetition of the same variety cubes being allowed. 
He showed that at most six varieties of solids are composable from such an instance. 
Our puzzle is a generalization of these previous puzzles
in the sense that an instance is a set of any $m$ cubes. 

Berkove et al.~\cite{Berkove.2008} already studied a further generalization. 
In their puzzle, which they call ``(Colored Cubes)$^3$ Puzzle,''
a puzzle solver is given a set of $m$ cubes as an instance
and is asked to compose an $n\times n\times n$ solid,
where $m\ge n^3$ is assumed. 
They prove that a solid is always composable when $n\ge3$. 
The scenario of the proof is described as follows: 
they show that, when $n\ge3$, an $n\times n\times n$ solid is composable
iff the instance contains such a subset of eight cubes
that can be arranged into a $2\times2\times2$ solid. 
They show that every instance of no less than $3^3=27$ cubes
has such a subset. 
Such a subset is significant because, to compose an $n\times n\times n$ solid,
it is most ``difficult'' to find eight cubes from the instance 
that are assigned to the corners of the solid;
these cubes form a $2\times2\times2$ solid.
We use the term ``difficult'' to mean that
cubes that can be assigned to corners
are most restricted since the three faces are exposed to the outside. 
On the other hand, it is comparatively easy to find 
cubes to be assigned to other parts of the solid.
For example, we can assign any cube to the inside
since no face is exposed to the outside. 
Thus, the $n=2$ case plays a significant role in the analysis of the $n\ge3$ case. 
This is why we concentrate on the $n=2$ case. 

Colored cube puzzles appear to have been
introduced to recreational mathematics
by P.A.~MacMahon in 1920's~\cite{MacMahon.1921,MacMahon.WWW}. 
He invented the first puzzle that deals with colored cubes. 
In his puzzle, a puzzle solver is given a set of 29 cubes
of different varieties except a certain variety, say $v$. 
The solver is asked to compose a $2\times2\times2$ solid
of the variety $v$ so that the {\em Domino Condition\/} is satisfied,
i.e., in the inside of the solid,
two faces touching each other should have the same color. 
The research history of this puzzle is summarized 
in M.~Gardner's famous ``Fractal Music, Hypercards and More...''~\cite{Gardner.1991}. 
The most outstanding way to solve the puzzle
was developed by J.H.~Conway.  
He arranged the 30 varieties of cubes
in the non-diagonal entries of a $6\times 6$ grid, as in \figref{conway}. 
When $v$ is the variety in the row $i$ and in the column $j$,
we can compose the desired solid from eight cubes,
one each of the varieties in row $j$
and column $i$, not including the $(j,i)$-variety. 

\begin{figure}[!t]
  \begin{center}
    \begin{tabular}{cccccc}
      \includegraphics[width=1.6cm,keepaspectratio]{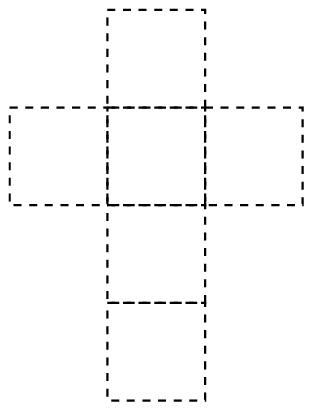}&
      \includegraphics[width=1.6cm,keepaspectratio]{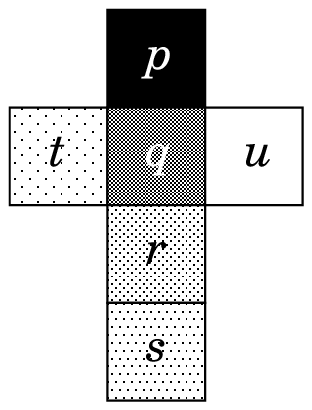}&
      \includegraphics[width=1.6cm,keepaspectratio]{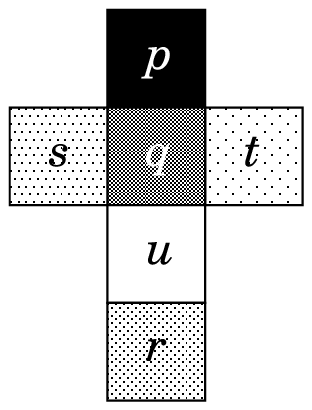}&
      \includegraphics[width=1.6cm,keepaspectratio]{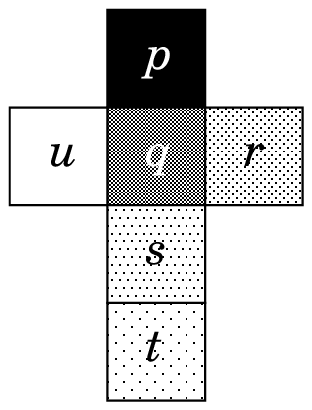}&
      \includegraphics[width=1.6cm,keepaspectratio]{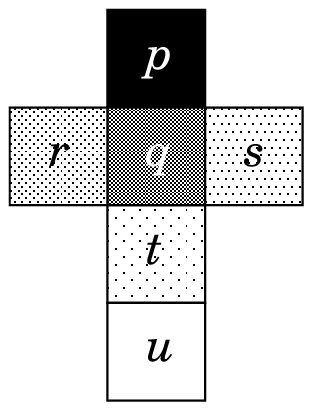}&
      \includegraphics[width=1.6cm,keepaspectratio]{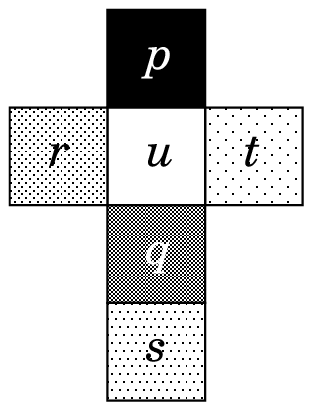}\\
      \includegraphics[width=1.6cm,keepaspectratio]{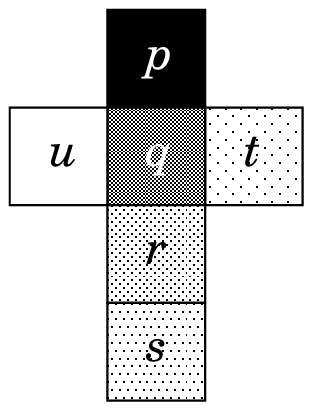}&
      \includegraphics[width=1.6cm,keepaspectratio]{Cube_blank.eps}&
      \includegraphics[width=1.6cm,keepaspectratio]{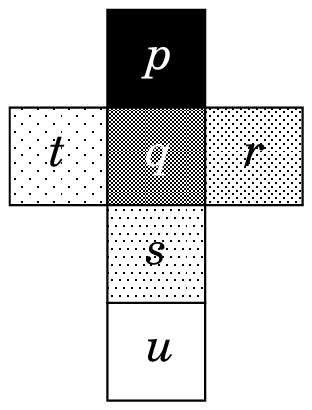}&
      \includegraphics[width=1.6cm,keepaspectratio]{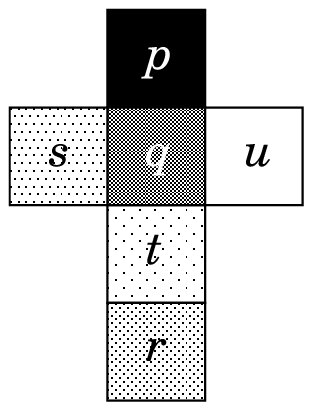}&
      \includegraphics[width=1.6cm,keepaspectratio]{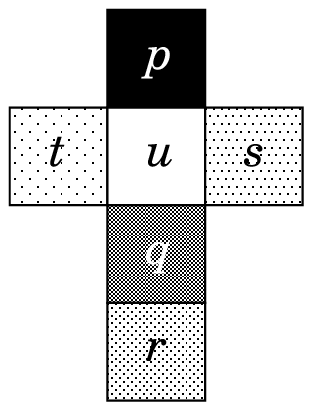}&
      \includegraphics[width=1.6cm,keepaspectratio]{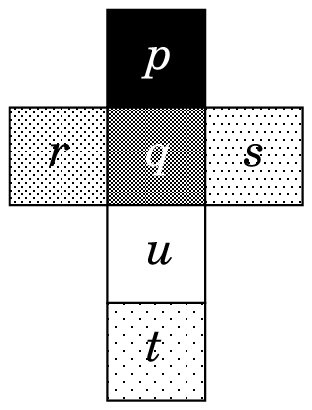}\\
      \includegraphics[width=1.6cm,keepaspectratio]{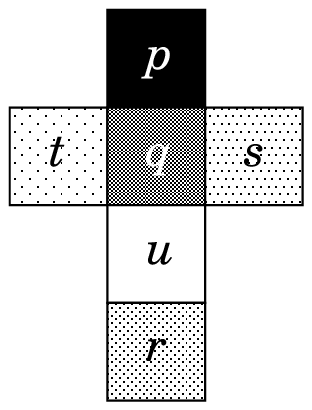}&
      \includegraphics[width=1.6cm,keepaspectratio]{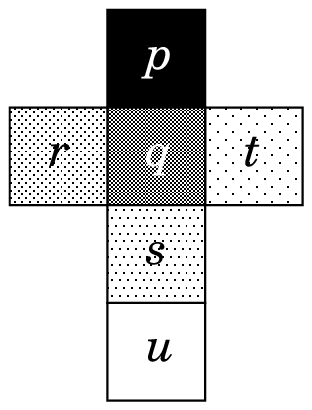}&
      \includegraphics[width=1.6cm,keepaspectratio]{Cube_blank.eps}&
      \includegraphics[width=1.6cm,keepaspectratio]{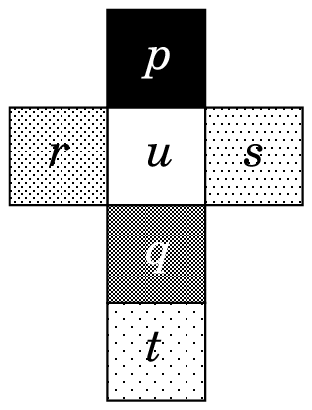}&
      \includegraphics[width=1.6cm,keepaspectratio]{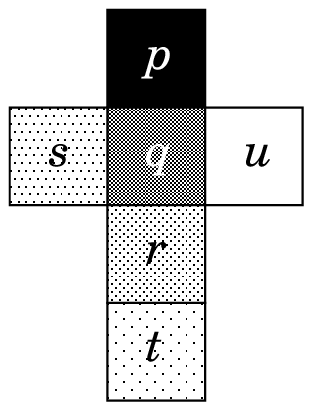}&
      \includegraphics[width=1.6cm,keepaspectratio]{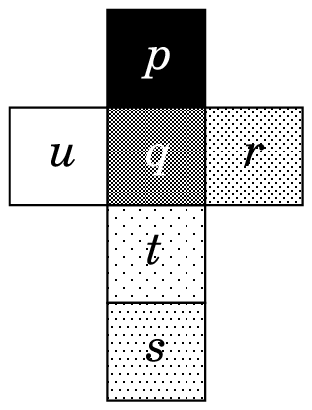}\\
      \includegraphics[width=1.6cm,keepaspectratio]{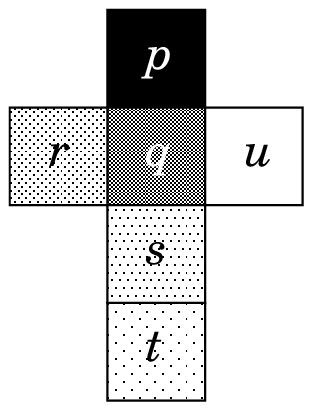}&
      \includegraphics[width=1.6cm,keepaspectratio]{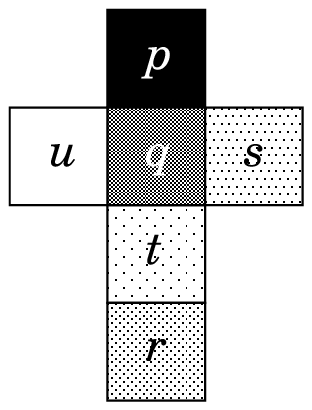}&
      \includegraphics[width=1.6cm,keepaspectratio]{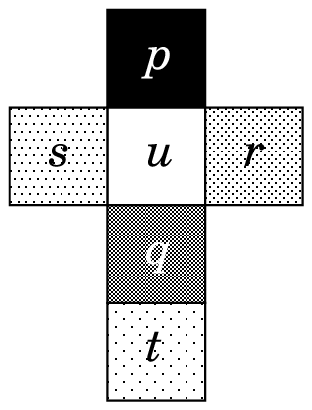}&
      \includegraphics[width=1.6cm,keepaspectratio]{Cube_blank.eps}&
      \includegraphics[width=1.6cm,keepaspectratio]{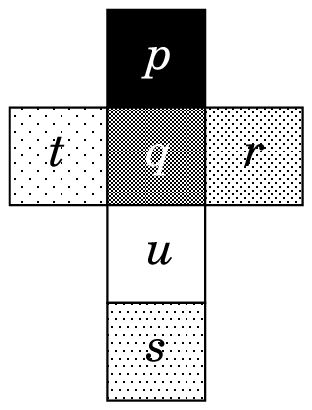}&
      \includegraphics[width=1.6cm,keepaspectratio]{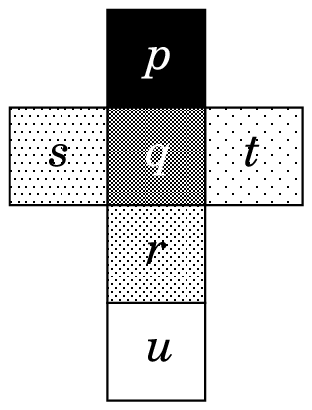}\\
      \includegraphics[width=1.6cm,keepaspectratio]{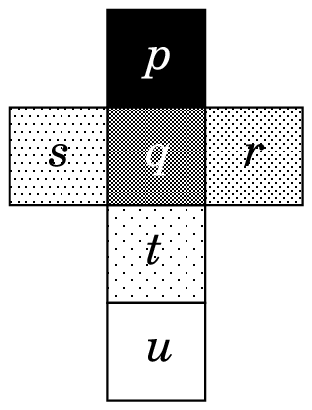}&
      \includegraphics[width=1.6cm,keepaspectratio]{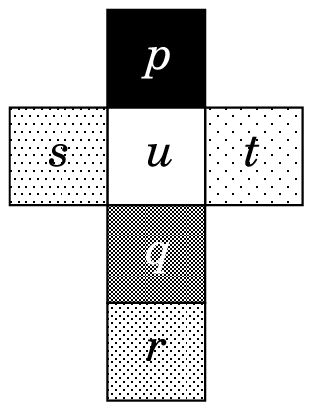}&
      \includegraphics[width=1.6cm,keepaspectratio]{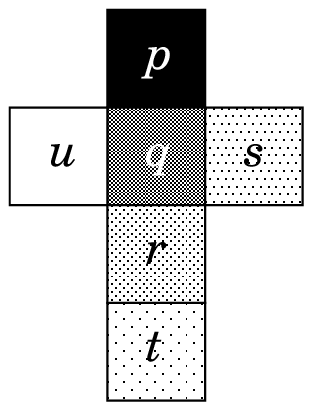}&
      \includegraphics[width=1.6cm,keepaspectratio]{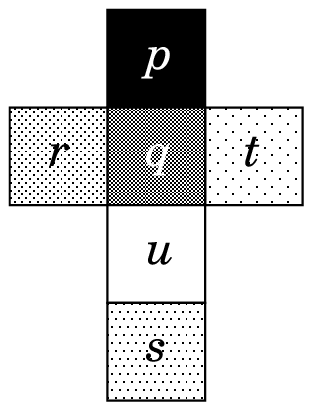}&
      \includegraphics[width=1.6cm,keepaspectratio]{Cube_blank.eps}&
      \includegraphics[width=1.6cm,keepaspectratio]{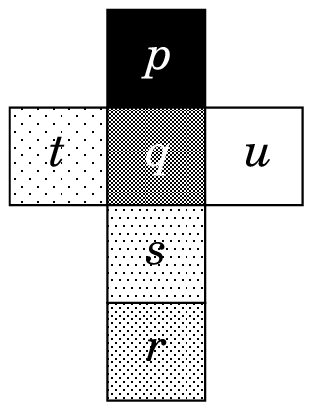}\\
      \includegraphics[width=1.6cm,keepaspectratio]{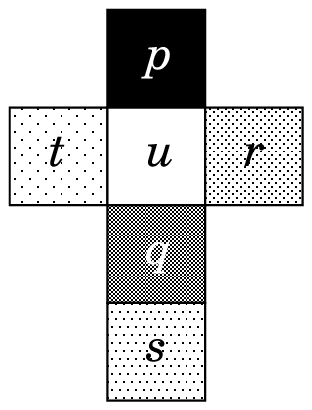}&
      \includegraphics[width=1.6cm,keepaspectratio]{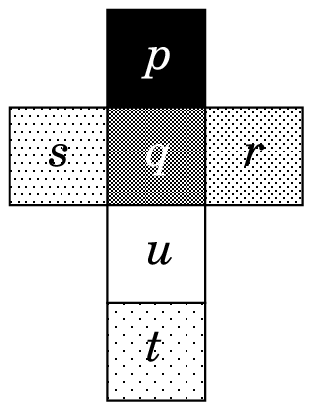}&
      \includegraphics[width=1.6cm,keepaspectratio]{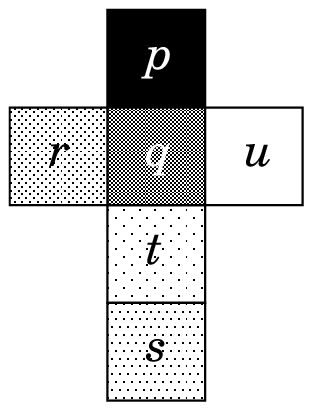}&
      \includegraphics[width=1.6cm,keepaspectratio]{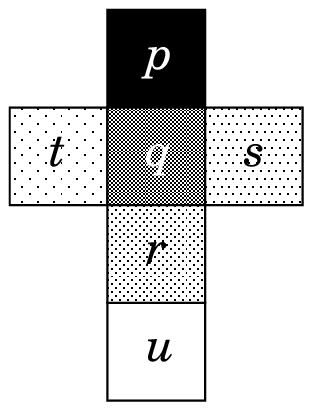}&
      \includegraphics[width=1.6cm,keepaspectratio]{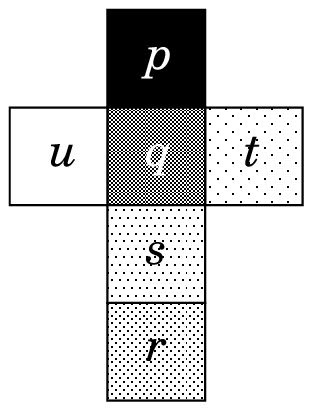}&
      \includegraphics[width=1.6cm,keepaspectratio]{Cube_blank.eps}
    \end{tabular}
    \caption{Conway's table of the 30 cube varieties. Each variety is represented by how the 6 colors are assigned to the development of a cube. 
      The six colors are denoted by $p,q,r,s,t,u$.}
    \label{fig:conway}
  \end{center}
\end{figure}

Our puzzle is different from  MacMahon's in three points:
in our puzzle, an instance is a multi-set of $m$ cubes,
the solid variety to be constructed is not specified to the solver,
and the Domino Condition is not imposed on the inside of the solid. 
Nevertheless,
we will use Conway's table in the analysis of our puzzle 
since it has interesting properties
that let us understand the analysis 
more easily.

Among other types of colored cube puzzles,
{\em Instant Insanity\/}\footnote{``Instant Insanity'' was
  originally trademarked by Parker Brothers in 1967.
  The trademark is now owned by Winning Moves, Inc.}
must be the most well-known puzzle. 
Recently Demaine et al.~analyzed
the computational complexity of its variations in \cite{DDEMU.2013},
where the research history is
well-covered.

\section{Preliminaries}
\label{sec:prel}

\subsection{Our Puzzle}
A cube has six faces, 12 edges and eight corners. 
The variety of a cube is specified by
how the six colors are assigned to its faces. 
We index the 30 varieties by means of Conway's table in \figref{conway}.
We denote the variety in the row $i$ and in the column $j$
by $(i,j)$. 
For a natural number $n$, let $[n]=\{1,2,\dots,n\}$. 
We denote the set of all the 30 varieties by $V_\all$,
i.e., $V_\all=\{(i,j)\in[6]\times[6]\mid i\ne j\}$. 
We call a cube of variety $(i,j)$ an {\em $(i,j)$-cube\/}. 

In order to explain the properties of Conway's table,
we introduce the notion of corner triple. 
For each corner of cube,
we define 
a
{\em corner triple\/} as an ordered triple
of three colors around the corner such that
the head element is set to
the color of the lexicographically smallest
letter
among them
and the colors are taken
in the clockwise order around the corner.  
A cube (and thus a variety)
has eight corner triples.
We denote the set of eight corner triples of variety $(i,j)$
by $T_{i,j}$.
We show examples as follows. 
\begin{align*}
T_{1,2} =& \{(p,q,t),(p,s,u),(p,t,s),(p,u,q),(q,r,t),(q,u,r),(r,s,t),(r,u,s)\},\\
T_{2,1} =& \{(p,t,q),(p,u,s),(p,s,t),(p,q,u),(q,t,r),(q,r,u),(r,t,s),(r,s,u)\}. 
\end{align*}
Observe that $40$ corner triples are possible in all. 
\invis{
This is because, although
there are $6\times5\times4=120$ ordered triples of three colors, 
we deal with triples like $(r,p,q)$ and $(q,r,p)$ as $(p,q,r)$
since $p$ is lexicographically smaller than $q$ and $r$. 
}
Two corner triples are the 
{\em mirror images\/}
of each other if both contain the same three colors but are different ordered triples.
For example, $(p,q,t)$ and $(p,t,q)$ are the mirror images of each other. 
Two varieties $(i,j)$ and $(i',j')$ are
the {\em mirror varieties\/} of each other 
if, for each triple in $T_{i,j}$, 
its mirror image belongs to $T_{i',j'}$. 
For example, $(1,2)$ and $(2,1)$ are the mirror varieties of each other. 

We say that two different varieties are {\em compatible\/}
(resp., {\em incompatible\/}) with each other
if they share at least one corner triple
(resp., no corner triple). 
For convenience, we may say that two cubes are compatible (resp., incompatible)
when their varieties are compatible (resp., incompatible) with each other. 
For a variety $(i,j)$,
let $V_{i,j}$ (resp., $\bar{V}_{i,j}$) denote the set of varieties compatible
(resp., incompatible) with $(i,j)$. 
Note that neither $V_{i,j}$ nor $\bar{V}_{i,j}$ contains the variety $(i,j)$ itself. 
Every variety in $V_{i,j}$ can be obtained by changing the color assignment 
of $(i,j)$ by either the ``swap'' operation 
or the ``rotation'' operation~\cite{Berkove.2008}. 
By the swap operation, we mean
the exchange of two colors
of adjacent faces
that share one edge with each other.
By the rotation operation, we mean
the rotation of colors
of the three faces
around a corner so that the corresponding corner triple is unchanged. 
The former yields 12 varieties
distinct from $(i,j)$ 
and the latter yields eight varieties
distinct from $(i,j)$. 
The variety set $\bar{V}_{i,j}$ consists of
the mirror variety of $(i,j)$
and the mirrors of the eight varieties that are generated by performing
the rotation operation on $(i,j)$. 

\begin{prop}
 \label{prop:share}
  {\bf (Proof of Lemma 2.7 in \cite{Berkove.2008})}
  We have $|V_{i,j}|=20$ and $|\bar{V}_{i,j}|=9$ 
  for any variety $(i,j)$. 
\end{prop}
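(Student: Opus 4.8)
The plan is to read off the two cardinalities from the structural description of $\bar{V}_{i,j}$ stated just before the proposition. Every variety other than $(i,j)$ either shares a corner triple with $(i,j)$ or it does not, so $V_\all\setminus\{(i,j)\}$, a set of $29$ varieties, is the disjoint union $V_{i,j}\sqcup\bar{V}_{i,j}$. By the preceding discussion, $\bar{V}_{i,j}$ is exactly the set of mirror varieties of $(i,j)$ and of the eight varieties obtained from $(i,j)$ by the rotation operation; those nine varieties are pairwise distinct (the rotation operation yields eight varieties, all different from $(i,j)$), and since taking the mirror variety is an involution on $V_\all$ and hence a bijection, their mirror varieties are nine pairwise distinct varieties. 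Thus $|\bar{V}_{i,j}|=9$, and therefore $|V_{i,j}|=29-9=20$.

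To turn this into a self-contained argument one must justify the structural input, which rests on three facts: (1) the rotation operation on $(i,j)$ produces eight distinct varieties, each $\ne(i,j)$; (2) every rotation neighbour, and every swap neighbour, is compatible with $(i,j)$; and (3) the mirror variety of $(i,j)$ and the mirror variety of each rotation neighbour are incompatible with $(i,j)$. Fact (2) is immediate: a rotation around a corner $c$ leaves the corner triple at $c$ unchanged by definition, and a swap of two adjacent faces leaves unchanged the corner triples at the two corners lying on neither face, so in both cases $T_{i,j}$ keeps at least one triple. Facts (1) and (3) I would handle on a fixed cube realising $(i,j)$, using two elementary observations: (a) a corner triple is never its own mirror image, for equality would force two of its three colours to coincide; and (b) the eight corner triples of a variety have eight \emph{distinct} colour sets, since a corner of a cube is determined uniquely by the three face colours at it (one from each pair of opposite faces). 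From (b), fact (1) follows once one tracks how the rotation around $c$ alters the colour sets at the other seven corners.

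Fact (3) is the crux. For the mirror of $(i,j)$: if $s\in T_{i,j}\cap T_{\overline{(i,j)}}$, then $s$ is the mirror of some $t\in T_{i,j}$, so $s$ and $t$ share a colour set and hence $s=t$ by (b), giving $s=\bar s$ and contradicting (a). For a rotation neighbour $w$ with fixed corner $c$ and antipode $c'$, I would check that $T_{i,j}$ and $T_w$ coincide in exactly the two triples at $c$ and $c'$ --- whose colour sets are complementary --- and that the colour sets of the six other triples of $w$ do not occur among the colour sets of $(i,j)$; both reduce to observing that the partition of the six colours into opposite-face pairs for $(i,j)$ and the one for $w$ have exactly two common transversals (triples picking one colour from each pair), namely the colour sets of $c$ and $c'$. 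Since $T_{\overline{w}}=\{\bar t:t\in T_w\}$ and mirroring preserves colour sets, any triple in $T_{i,j}\cap T_{\overline{w}}$ would have to carry the colour set of $c$ or of $c'$, and there $T_{\overline{w}}$ carries the mirror of $(i,j)$'s triple, which is not in $T_{i,j}$ by (b). Hence $\overline{w}$ is incompatible with $(i,j)$. One could also skip the completeness half of the structural claim and argue by counting instead: the $12$ swap neighbours and the $8$ rotation neighbours are compatible with $(i,j)$ and are $20$ distinct varieties --- a swap neighbour shares with $(i,j)$ exactly two corner triples whose colour sets overlap in two colours, whereas a rotation neighbour shares two whose colour sets are complementary, so the two families are disjoint --- whence $|V_{i,j}|\ge20$ and $|\bar{V}_{i,j}|\ge9$, and $20+9=29$ forces equality.

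The step I expect to require the most care is fact (3): bookkeeping exactly which corner triples and colour sets survive the rotation operation and mirroring, and verifying the ``two common transversals'' statement about the opposite-face colour pairings. The rest is either formal --- the $V_{i,j}\sqcup\bar{V}_{i,j}$ partition and the mirror-is-a-bijection count --- or already supplied by the discussion preceding the statement.
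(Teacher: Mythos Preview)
Your approach matches the paper's: the paper does not supply a separate proof for this proposition but simply attributes it to the proof of Lemma~2.7 in Berkove et al., relying on the structural description of $V_{i,j}$ and $\bar V_{i,j}$ given in the paragraph immediately before the statement. Your first paragraph --- partition $V_\all\setminus\{(i,j)\}$ into $V_{i,j}\sqcup\bar V_{i,j}$, read $|\bar V_{i,j}|=9$ off that description via the mirror bijection, and subtract from $29$ --- is exactly the intended reading of that passage, so on that level the proofs agree.

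Where you diverge is in scope: the paper takes the structural claims (twelve distinct swap neighbours, eight distinct rotation neighbours, $\bar V_{i,j}$ equals the mirrors of $(i,j)$ and its rotation neighbours) as imported facts from \cite{Berkove.2008}, whereas you attempt to re-derive them from first principles via the colour-set and transversal arguments. That extra work is sound --- in particular, your observation that the opposite-face colour pairings of $(i,j)$ and of a rotation neighbour $w$ share exactly two transversals, namely the colour sets at the fixed corner and its antipode, is a clean way to pin down $T_{i,j}\cap T_w$ and then to exclude $T_{i,j}\cap T_{\overline w}$ --- and your final counting alternative ($20$ compatible plus $9$ incompatible forcing equality) is a nice way to avoid checking completeness of the $\bar V_{i,j}$ list directly. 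None of this appears in the present paper; it is closer to reconstructing the cited Berkove argument. One small point to tighten if you write it out in full: you use that the eight rotation neighbours are pairwise distinct (not merely distinct from $(i,j)$), which you gesture at under fact~(1) but do not spell out; it follows, for instance, from your own transversal analysis, since different choices of fixed corner yield different pairs of shared corner triples with $(i,j)$.
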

Moreover, whenever two varieties are compatible,
they share exactly two corner triples. 
\begin{prop}
 \label{prop:corner}
  {\bf (Lemma 2.7 in \cite{Berkove.2008})}
  Two different varieties share either zero or two corner triples with each other.
\end{prop}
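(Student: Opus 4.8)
The plan is to reduce the statement to an analysis of how two cubes can agree around a common corner, and then to close the count using \propref{share}.

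First I would record a preliminary fact: each of the forty possible corner triples occurs in exactly six of the thirty varieties. Since a cube's eight corners are in bijection with the eight triples of mutually adjacent faces, and all six faces of a cube carry distinct colors, the eight corner triples of any single variety are pairwise distinct; hence the number of incident (variety, corner triple) pairs equals $30\cdot 8=240$. On the other hand the group $S_6$ permuting the six colors acts on the varieties and on the corner triples compatibly, and it acts transitively on the forty corner triples — a corner triple is just a three-element color set together with a cyclic orientation, $S_6$ is transitive on the three-element sets, and transposing two colors of the set reverses the orientation. Therefore each corner triple lies in $240/40=6$ varieties.

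Next, fix a variety $(i,j)$ and any variety $w\ne(i,j)$ sharing a corner triple $\tau$ with it. I would rotate both cubes so that the corner carrying $\tau$ occupies the same position with the same clockwise color arrangement; this placement is unique because a corner triple records the three colors \emph{and} their cyclic order. The two cubes then agree on the three faces meeting that corner, while the remaining three colors are distributed over the three opposite faces by two bijections which, as $w\ne(i,j)$, differ by a transposition or by a $3$-cycle. If the difference is a transposition, $w$ is obtained from $(i,j)$ by swapping the colors of two adjacent faces, and the two corners incident to neither of those faces carry identical triples in both cubes, so $(i,j)$ and $w$ share at least two corner triples. If it is a $3$-cycle, $w$ is obtained by cyclically permuting the colors of the three faces around one corner; that corner and its antipode retain their triples, so again at least two corner triples are shared.

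Finally I would upgrade ``at least two'' to ``exactly two.'' Each of the eight corner triples of $(i,j)$ lies, by the preliminary fact, in five varieties besides $(i,j)$, giving $8\cdot 5=40$ incidences spread over the varieties compatible with $(i,j)$; by \propref{share} there are exactly twenty such varieties, and by the previous step each accounts for at least two of these incidences, so each accounts for exactly two. Varieties incompatible with $(i,j)$ share none by definition, which settles the claim. I expect the delicate point to be the alignment step of the case analysis: one must check that the placement at the shared corner is genuinely canonical — this is exactly where distinctness of a cube's eight corner triples and the orientation data carried by a corner triple are both used — and one must take care not to confuse two corners merely meeting faces of the same three colors with two corners carrying the same \emph{oriented} corner triple, since the former can occur without the latter. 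Everything else is bookkeeping over the cube's eight corners.
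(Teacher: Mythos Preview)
Your argument is correct, but there is nothing in the present paper to compare it against: \propref{corner} is quoted from \cite{Berkove.2008} without proof.

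The logic stands. The preliminary count is right once one knows a single cube's eight corner triples are pairwise distinct (and they are distinct already as unordered color sets, since each corner picks one face from each of the three opposite pairs); $S_6$-transitivity on the forty oriented triples then gives six varieties per triple. The alignment step in your case analysis is the only delicate point and you handle it properly: pinning both cubes at the shared corner forces the three near faces to agree, so the varieties differ by a nontrivial permutation of the three far colors; a transposition of two adjacent far faces leaves the two corners touching neither of them intact, while a $3$-cycle of the three far faces preserves the triple at the far corner (the cyclic order of colors is unchanged) and at its antipode. Your closing double count --- $8\cdot 5=40$ incidences spread over the twenty compatible varieties, each carrying at least two --- is valid and not circular: the paper's sketch of \propref{share} just before it (twelve swap varieties, eight rotation varieties, nine explicit incompatible mirrors, $1+20+9=30$) is an enumeration that does not rely on \propref{corner}.

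Judging from that sketch, Berkove et al.\ presumably establish the ``at least two'' direction much as you do, by identifying every compatible variety as one swap or one rotation away. What they do for the upper bound is not visible here; your averaging against \propref{share} is a clean way to close it and avoids further casework.
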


Now we present the properties of Conway's table as follows. 
%
\begin{proper}
\label{proper:mirror}
  The mirror variety of $(i,j)$ is $(j,i)$.
\end{proper}
\begin{proper}
\label{proper:noshare}
  The five varieties on the same row or on the same column
  are incompatible with each other. 
\end{proper}

From $|\bar{V}_{i,j}|=9$, we have
\begin{align*}
  \bar{V}_{i,j}=&\{(j,i)\}\cup\{(i,j')\mid j'\in[6]\setminus\{i,j\}\}
  \cup\{(i',j)\mid i'\in[6]\setminus\{i,j\}\}.
\end{align*}
This implies that, to compose a $2\times2\times2$ cube of variety $(i,j)$, 
we cannot use cubes of the mirror variety $(j,i)$
or cubes of the varieties on the row $i$
or on the column $j$,
except $(i,j)$ itself.  

An {\em instance\/} 
is a multi-set of colored cubes. We denote an instance by $I$.
We represent the distribution of varieties in $I$ by a $6\times 6$ matrix,
which we denote by its calligraphic style ${\mathcal I}$. 
Each value ${\mathcal I}_{i,j}$ in the matrix represents
the number of $(i,j)$-cubes in the instance. 
Note that we let ${\mathcal I}_{i,i}=0$ for any $i\in[6]$. 
We call $|I|$ the {\em size of $I$\/}. 
Clearly we have
\[
|I|=\sum_{(i,j)\in V_\all} {\mathcal I}_{i,j}. 
\]

A {\em solid\/} 
is a $2\times2\times2$ cube 
that is composed of eight $1\times1\times1$ cubes
and that satisfies 
the 
Surface Color Condition.
We call a solid {\em an $(i,j)$-solid\/}
when its variety is $(i,j)$. 
We say that an $(i,j)$-solid is {\em composable from $I$\/}
if $I$ contains a subset of eight cubes
that can be arranged into an $(i,j)$-solid.
An instance $I$ is {\em $V$-generatable\/}
if an $(i,j)$-solid is composable from $I$ when and only when $(i,j)\in V$. 
An instance is {\em universal\/} if it is $V_\all$-generatable. 
An instance is {\em infeasible\/} if it is $\emptyset$-generatable.

\subsection{Composability Conditions}

We introduce two necessary and sufficient conditions
for an $(i,j)$-solid to be composable
from a given instance $I$,
using graph terminology. 
We also derive corollaries from these conditions,
which we will use to solve the three problems
introduced in \secref{intro}.  

The first condition is rather straightforward. 
We consider a bipartite graph $B_{i,j}$
such that nodes on one side are cubes in $I$
and nodes on the other side are corner triples in $T_{i,j}$. 
Since $I$ is a multi-set, it may contain multiple cubes for a single variety,
and all of them appear as distinct nodes. 
A pair $(c,\tau)$ in $I\times T_{i,j}$
is joined by an edge whenever $c$ has corner triple $\tau$.  
The existence of edge $(c,\tau)$ indicates that
$c$ can be allocated to the corner having $\tau$. 
\invis{
Let us define a bipartite graph
$B_{i^\ast,j^\ast}=(I\cup T_{i^\ast,j^\ast},F_{i^\ast,j^\ast})$ such that
the nodes are cubes in $I$ and corner triples in $T_{i^\ast,j^\ast}$
and the edge set 
$F_{i^\ast,j^\ast}\subseteq I\times T_{i^\ast,j^\ast}$
is defined as follows:
\[
F_{i^\ast,j^\ast} =\{(c,\tau)\in I\times T_{i^\ast,j^\ast}\mid \textrm{cube\ }c\textrm{\ has\ corner\ triple\ }\tau\}. 
\]
Hence $(c,\tau)\in F_{i^\ast,j^\ast}$ indicates that
cube $c$
can be allocated to the corner having $\tau$. 
Let us denote an $(i,j)$-cube by $c_{i,j}$. 
}
Let the variety of $c$ be $v$. 
From \propref{corner},
the degree of $c$ is zero if $v\in \bar{V}_{i,j}$,
it is two if $v\in V_{i,j}$, and
it is eight if $v=(i,j)$. 
A {\em matching\/} in a graph is a subset of edges such that
no two edges in the subset have an endpoint in common. 
A matching in $B_{i,j}$
represents admissible allocation of cubes to corners,
where each cube in $I$ is allocated to at most one corner, and
each corner is assigned at most one cube.  
Since $|T_{i,j}|=8$,
we have the following condition. 
\begin{prop}
  \label{prop:Bhave}
  An $(i,j)$-solid is composable from an instance $I$
  iff there is an $8$-size matching in the bipartite graph $B_{i,j}$. 
\end{prop}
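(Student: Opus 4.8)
The plan is to prove the biconditional in \propref{Bhave} in both directions, translating between geometric arrangements of cubes into a solid and combinatorial matchings in $B_{i,j}$.

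First I would establish the easy direction: if an $(i,j)$-solid is composable from $I$, then there is an $8$-size matching in $B_{i,j}$. Suppose eight cubes $c_1,\dots,c_8$ from $I$ are arranged into an $(i,j)$-solid $S$. The solid $S$ has eight corners, and since $S$ satisfies the Surface Color Condition and has variety $(i,j)$, its eight corner triples are exactly the elements of $T_{i,j}$ — note here that the eight corner triples of a variety are pairwise distinct, which follows from Propositions~\ref{prop:share} and~\ref{prop:corner} (a variety shares exactly two corner triples with each of its $20$ compatible varieties, and the $8\cdot 20 / 2 = 80$ incidences together with $40$ possible triples and the degree count force the eight triples of a single variety to be distinct; in fact one can read this off Conway's table directly). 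Each cube $c_k$ occupies one corner position, and the three outward faces of $c_k$ at that corner display, read in clockwise order starting from the lexicographically smallest color, precisely the corner triple $\tau_k\in T_{i,j}$ assigned to that position of $S$. Hence $c_k$ has corner triple $\tau_k$, so $(c_k,\tau_k)$ is an edge of $B_{i,j}$; and since the eight corner positions are distinct and the $\tau_k$ are distinct, the set $\{(c_k,\tau_k):k\in[8]\}$ is a matching of size $8$.

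Conversely, given an $8$-size matching $M=\{(c_k,\tau_k):k\in[8]\}$ in $B_{i,j}$ with the $\tau_k$ ranging over all of $T_{i,j}$ (size $8$ forces this since $|T_{i,j}|=8$) and the $c_k$ distinct cubes of $I$, I would build the solid explicitly. Fix a canonical $(i,j)$-solid layout: a labelling of its eight corner positions by the eight triples in $T_{i,j}$, together with, for each position, a designated clockwise orientation of the three faces meeting there. For position with triple $\tau_k$, place cube $c_k$ there: since $(c_k,\tau_k)\in B_{i,j}$, cube $c_k$ has $\tau_k$ as one of its corner triples, so $c_k$ can be rotated so that the corner carrying $\tau_k$ points outward at that position with the colors in the prescribed clockwise cyclic order. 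The three outward faces of $c_k$ then show the colors of $\tau_k$ in the correct arrangement. I must then check that the resulting $2\times2\times2$ block is genuinely an $(i,j)$-solid, i.e.\ satisfies the Surface Color Condition: each of the six faces of the big cube is a $2\times2$ array of unit faces, and the four colors shown on any big face are the ones dictated by the four incident corner triples of the canonical $(i,j)$-solid. By construction these four unit faces show exactly the colors that the variety $(i,j)$ prescribes, and since $(i,j)$ is a legitimate cube variety its big-face colors are monochromatic per unit face with the right global pattern; thus the Surface Color Condition holds and the block has variety $(i,j)$.

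The main obstacle is the converse direction, specifically the bookkeeping that a matching supplies \emph{consistent} orientations: each cube is placed using only the data of a single corner triple, and one must be sure that fixing that corner's outward orientation does not create a conflict on the shared faces between adjacent cubes. The clean way around this is to observe that the canonical $(i,j)$-solid already exists as a physical object (the variety $(i,j)$ is, by definition, realized by some $2\times2\times2$ coloring), so its eight corner sub-cubes have precisely the triples in $T_{i,j}$, one per corner, with mutually consistent orientations; then replacing corner sub-cube $k$ by $c_k$ is legitimate because $c_k$ carries the same triple $\tau_k$ at one of its own corners, and rotating $c_k$ to match that corner sub-cube's position and orientation makes the three exposed faces agree color-for-color. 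Interior face agreement is irrelevant since the Surface Color Condition constrains only the exposed surface. I would spell this out carefully, as it is the only nontrivial point; the counting facts ($|T_{i,j}|=8$, distinctness of the eight triples) I would invoke from \propref{share} and \propref{corner} and from inspection of \figref{conway}.
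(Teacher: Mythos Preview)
Your argument is correct and matches the paper's (implicit) reasoning: the paper does not supply a written proof of \propref{Bhave} at all, treating it as immediate from the construction of $B_{i,j}$ and the sentence ``Since $|T_{i,j}|=8$, we have the following condition.'' What you have written is a faithful unpacking of that sentence in both directions.

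One minor remark: your detour through Propositions~\ref{prop:share} and~\ref{prop:corner} to argue that the eight corner triples of a single variety are pairwise distinct is unnecessarily indirect. A simpler observation suffices: three faces of a cube meet at exactly one corner, so no two corners of a cube carry the same unordered set of three colors (let alone the same ordered triple). This keeps the proof self-contained and avoids the somewhat delicate counting you sketch.
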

Clearly, if an 8-size matching exists, then it is a maximum cardinality matching. 
For any subset $T$ of $T_{i,j}$,
let $I(T)$ denote the subset of cube nodes 
that are adjacent to at least one node in $T$.  
That is, $I(T)$ is the subset of cubes in $I$
that have a corner triple in $T$. 
From the Marriage Theorem~\cite{Lovasz.09},
the following corollary is immediate. 
\begin{cor}
  \label{cor:Bhave}
  An $(i,j)$-solid is composable from $I$
  iff $|T|\le|I(T)|$ holds for all possible subsets $T$ of $T_{i,j}$.
\end{cor}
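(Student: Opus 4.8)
The plan is to derive \corref{Bhave} directly from \propref{Bhave} together with the classical Hall/Marriage Theorem applied to the bipartite graph $B_{i,j}$. By \propref{Bhave}, composability of an $(i,j)$-solid from $I$ is equivalent to the existence of an $8$-size matching in $B_{i,j}$; since $|T_{i,j}|=8$, such a matching saturates the side $T_{i,j}$. So it suffices to show that $B_{i,j}$ has a matching saturating $T_{i,j}$ if and only if $|T|\le|I(T)|$ for every $T\subseteq T_{i,j}$.

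First I would recall the statement of Hall's Marriage Theorem in the form given in \cite{Lovasz.09}: a bipartite graph with parts $X$ and $Y$ has a matching saturating $X$ iff for every $S\subseteq X$ the neighbourhood $N(S)$ satisfies $|S|\le|N(S)|$. Here I take $X=T_{i,j}$ and $Y=I$ (the cube nodes), so that for $T\subseteq T_{i,j}$ the neighbourhood $N(T)$ is exactly the set of cubes in $I$ having some corner triple in $T$, which is precisely $I(T)$ as defined just before the corollary. Thus Hall's condition reads $|T|\le|I(T)|$ for all $T\subseteq T_{i,j}$, and the theorem gives the equivalence with the existence of a $T_{i,j}$-saturating matching.

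Then I would close the loop: a $T_{i,j}$-saturating matching has exactly $|T_{i,j}|=8$ edges, hence is an $8$-size matching; conversely an $8$-size matching in $B_{i,j}$ must saturate $T_{i,j}$ because that side has only $8$ nodes. Combining this observation with \propref{Bhave} yields that an $(i,j)$-solid is composable from $I$ iff $B_{i,j}$ has a $T_{i,j}$-saturating matching iff $|T|\le|I(T)|$ holds for all $T\subseteq T_{i,j}$, which is the claim.

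There is essentially no technical obstacle here; the corollary is a one-line consequence once the bipartite graph is set up and Hall's theorem is invoked. The only point requiring a moment of care is the bookkeeping of which side Hall's condition is applied to — one must take the deficiency condition over subsets of the $T_{i,j}$ side (the smaller, fixed-size side), not over subsets of the cube side $I$, since $I$ may be arbitrarily large and need not be saturated. Making the correspondence $N(T)=I(T)$ explicit, as the paragraph preceding the corollary already does, removes any ambiguity.
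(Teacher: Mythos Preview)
Your proposal is correct and matches the paper's approach exactly: the paper simply states that the corollary is immediate from \propref{Bhave} and the Marriage Theorem~\cite{Lovasz.09}, which is precisely the argument you spell out. Your extra care about applying Hall's condition on the $T_{i,j}$ side is appropriate but not something the paper dwells on.
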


The second condition 
is based on a certain multi-graph 
that is obtained by changing the structure of $B_{i,j}$ as follows;
remove all cube nodes of degrees 0 and 8 and the connecting edges,
and then shrink every cube node of degree 2 and its connecting edges
into a new edge.
As a result, only $T_{i,j}$ remains as the node set.
Each edge is associated with a cube compatible with $(i,j)$ 
and multi-edges may appear.  
We denote this multi-graph by $G_{i,j}=(T_{i,j},E_{i,j})$.
Since all compatible cubes in $I$ appear as edges,
we have $|E_{i,j}|=\sum_{(i',j')\in V_{i,j}} {\mathcal I}_{i',j'}$.

\invis{
It connects two nodes $\tau$ and $\tau'$
whenever
$T_{i,j}\cap T_{i^\ast,j^\ast}=\{\tau,\tau'\}$.
Each edge 
is associated with an $(i,j)$-cube in the instance 
that is compatible with $(i^\ast,j^\ast)$,
that is, $(i,j)\in V_{i^\ast,j^\ast}$. 
Recall \propref{corner} which states
that any $(i,j)$-cube shares exactly two corner triples with $(i^\ast,j^\ast)$. 
}

\begin{prop}
  \label{prop:Ghave}
  An $(i,j)$-solid is composable from an instance $I$
  iff ${\mathcal I}_{i,j}$ is no less than
  the number of tree
  components
  in the graph $G_{i,j}$. 
\end{prop}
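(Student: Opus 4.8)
The plan is to reduce composability to the matching condition of \propref{Bhave} and then to analyze $B_{i,j}$ by exploiting the very restricted degree structure of its cube nodes. By \propref{share} and \propref{corner}, every cube node of $B_{i,j}$ has degree $0$, $2$, or $8$: the nine incompatible varieties give degree-$0$ nodes, which are useless and may be deleted; the twenty compatible varieties give degree-$2$ nodes, which are precisely the edges of $G_{i,j}$; and each of the ${\mathcal I}_{i,j}$ cubes of variety $(i,j)$ itself is adjacent to all eight triples of $T_{i,j}$, i.e.\ it is a ``wildcard'' that may be allocated to any corner. Hence an $8$-size matching in $B_{i,j}$ amounts to choosing a set $S\subseteq T_{i,j}$ together with an injection $f$ from $S$ into $E_{i,j}$ such that each $s\in S$ is an endpoint of $f(s)$ (this encodes allocating $|S|$ distinct compatible cubes to the corners in $S$), and then covering the remaining $8-|S|$ corners with distinct $(i,j)$-cubes, which is possible exactly when ${\mathcal I}_{i,j}\ge 8-|S|$. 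So an $(i,j)$-solid is composable iff ${\mathcal I}_{i,j}\ge 8-s^\ast$, where $s^\ast$ is the largest possible $|S|$.

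It then remains to prove the purely combinatorial claim that for the multigraph $G_{i,j}$ on the eight-vertex set $T_{i,j}$ one has $s^\ast = 8 - (\text{number of connected components of }G_{i,j}\text{ that are trees})$, a single isolated triple counting as a tree component; equivalently $8-s^\ast$ equals the number of tree components, which is exactly the claimed bound. For the upper bound on $s^\ast$ I would argue component by component: if $C$ is a tree component with $v$ vertices, then the edges $f(s)$ for $s\in S\cap V(C)$ are $|S\cap V(C)|$ distinct edges lying inside $C$, and as $C$ has only $v-1$ edges we get $|S\cap V(C)|\le v-1$, so at least one vertex of $C$ lies outside $S$; summing over the tree components gives $|T_{i,j}\setminus S|\ge(\text{number of tree components})$. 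For the matching lower bound I would construct $S$ and $f$ explicitly: in each tree component, root it at an arbitrary vertex and orient every edge toward the leaf side, so each non-root vertex is covered by the edge to its parent and only the root is left uncovered; in each non-tree component, which by connectivity has at least as many edges as vertices, take a spanning tree plus one extra edge, orient the unique resulting cycle cyclically and the remaining tree edges away from the cycle, so that every vertex gets exactly one incoming edge. Setting $f(s)$ to be the edge oriented into $s$ yields the required injection, with uncovered vertices being exactly one root per tree component.

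Finally I would assemble the two directions. For the ``only if'' direction, from any $8$-size matching, recording which corners are served by compatible cubes gives a set $S$ and an injection $f$ as above, and since the $8-|S|$ corners outside $S$ are served by distinct $(i,j)$-cubes we get $8-{\mathcal I}_{i,j}\le|S|\le 8-(\text{number of tree components})$, hence ${\mathcal I}_{i,j}\ge(\text{number of tree components})$. For the ``if'' direction, given ${\mathcal I}_{i,j}\ge(\text{number of tree components})$, use the explicit construction to cover $s^\ast=8-(\text{number of tree components})$ corners with distinct compatible cubes and the remaining corners with distinct $(i,j)$-cubes, producing an $8$-size matching, so composability follows from \propref{Bhave}. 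I expect the main obstacle to be the lower-bound construction in the combinatorial claim — in particular, orienting the non-tree components so that genuinely every vertex is covered by an injective edge assignment, together with the bookkeeping needed to treat isolated triples and multi-edges correctly when counting tree components.
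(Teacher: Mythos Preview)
Your proof is correct. The reduction to an ``injective edge-to-vertex assignment'' in $G_{i,j}$ is sound, the upper bound on $s^\ast$ via the edge count of each tree component is clean, and the orientation construction (parent edges in tree components; a spanning tree plus one extra edge, with the resulting cycle oriented cyclically and the remaining edges oriented away from it, in non-tree components) does produce an injection covering every vertex except one root per tree component. Isolated triples and multi-edges cause no trouble: an isolated triple is a one-vertex tree whose root is simply uncovered, and multi-edges only help, since the construction never needs more than one edge between any pair.

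Your route is genuinely different from the paper's. The paper argues both directions via \corref{Bhave} (Hall's theorem): for necessity it takes $T$ to be the union of the vertex sets of all tree components and shows $|I(T)|<|T|$ when ${\mathcal I}_{i,j}$ is too small; for sufficiency it takes a Hall-violating set $T$, passes to the subgraph of $G_{i,j}$ induced by $T$, and counts tree components there. Your argument instead bypasses Hall entirely and builds the $8$-size matching explicitly through the orientation trick. The trade-off is that the paper's proof is shorter and non-constructive, while yours is longer but constructive: it actually tells a solver which compatible cube to place at which corner, leaving exactly one corner per tree component to be filled by an $(i,j)$-cube.
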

\begin{proof}
For the necessity, 
let us denote the number of tree components by $k$. 
Suppose ${\mathcal I}_{i,j}<k$. 
We denote the set of all nodes in the $k$ tree components by $T$
and the set of all cubes appearing in the components as edges by $I'$. 
We have $|I'|=|T|-k$. 
The set $T$ of corner triples 
is a node subset
in the bipartite graph $B_{i,j}$. 
We have $I(T)=I'\cup\{(i,j)\textrm{-cubes\ in\ }I\}$,
where the second cube set in the right side may be a multi-set.
It follows that:
\[
|I(T)|=|I'|+{\mathcal I}_{i,j}=|T|-k+{\mathcal I}_{i,j}<|T|.
\]
By \corref{Bhave},
we cannot compose an $(i,j)$-solid in any way. 

For the sufficiency,
if an $(i,j)$-solid is not composable,
then there is a subset $T$ of $T_{i,j}$ 
such that $|T|>|I(T)|={\mathcal I}_{i,j}+|I'|$,
where $I'$ is the subset of compatible cubes in $I$ that
are adjacent to a node in $T$.
Consider the subgraph of $G_{i,j}$ induced by $T$.
The number of edges in the subgraph is equal to $|I'|$. 
Denoting the number of tree components in the subgraph by $k$,
we have $|I'|\ge|T|-k$. 
Since $|T|>{\mathcal I}_{i,j}+|T|-k$, we have $k>{\mathcal I}_{i,j}$. 
\end{proof}

We give an example of 
how \propref{Ghave} works. 
In \figref{graph-example},
we show graphs $G_{1,2}$ and $G_{2,3}$ for the instance $I$ with nine cubes 
that has the following matrix representation:
\begin{align}
{\mathcal I}=
\left(
\begin{array}{llllll}
\textrm{0} & 2 & 0 & 0 & 0 & 0\\
0 & \textrm{0} & 0 & 0 & 0 & 1 \\
0 & 0 & \textrm{0} & 0 & 1 & 1 \\
0 & 0 & 0 & \textrm{0} & 0 & 0 \\
0 & 0 & 0 & 0 & \textrm{0} & 2 \\
0 & 0 & 0 & 1 & 1 &\textrm{0}
\end{array}
\right).
\label{eq:matrix-1}
\end{align}
In the figure, a node (resp., an edge) is
labeled with its corresponding corner triple
(resp., cube variety). 
We see that a $(1,2)$-solid is composable
since there is only one tree component in $G_{1,2}$ (i.e., an isolated point $(p,s,u)$)
and ${\mathcal I}_{1,2}=2$.
On the other hand, a $(2,3)$-solid is not composable 
since there is a tree component in $G_{2,3}$ but ${\mathcal I}_{2,3}=0$.

\begin{figure}[t]
  \centering
    \begin{tabular}{cc}
    \includegraphics[width=5.9cm,keepaspectratio]{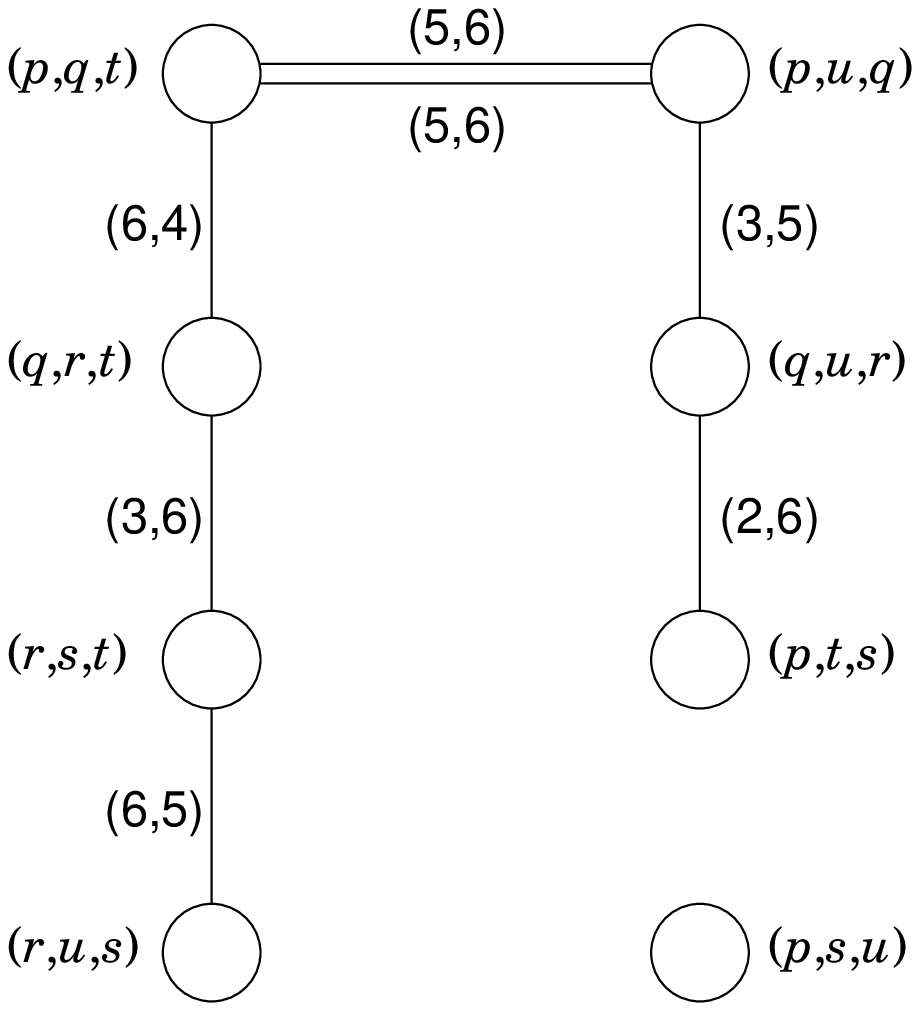} &
    \includegraphics[width=5.9cm,keepaspectratio]{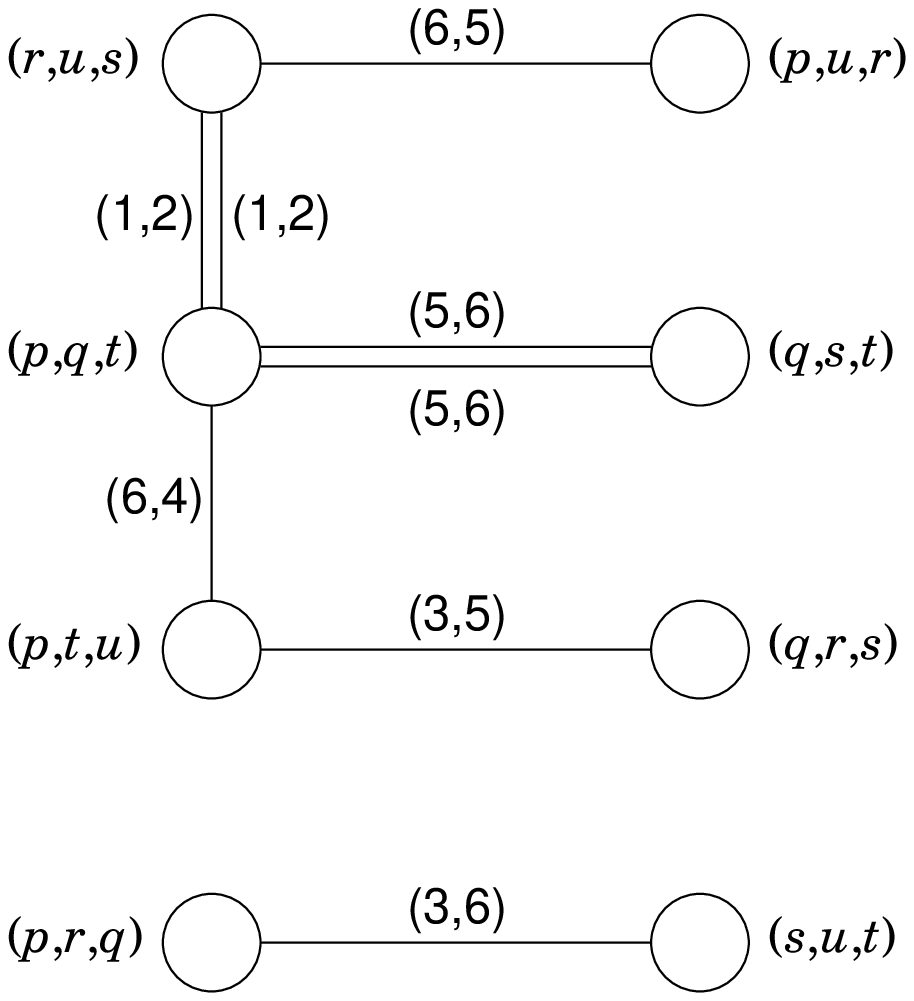} \\
    $G_{1,2}$ & $G_{2,3}$
    \end{tabular}
    \caption{Graphs $G_{1,2}$ and $G_{2,3}$ for the instance $I$ of \eqref{matrix-1}}
    \label{fig:graph-example}
\end{figure}

As a corollary to \propref{Ghave},
we derive a sufficient condition
for an $(i^\ast,j^\ast)$-solid to be composable from a given $I$. 
Let us consider composing an $(i^\ast,j^\ast)$-solid
from only $(i^\ast,j^\ast)$-cubes 
and cubes whose varieties are in a single row $i_0\ne i^\ast$ 
(or a single column $j_0\ne j^\ast$ analogously)
of Conway's table. 
In the row $i_0$,
exactly four varieties are compatible with $(i^\ast,j^\ast)$
from \properref{noshare}. 
Observe a subgraph of $G_{i^\ast,j^\ast}$ 
such that the node set is $T_{i^\ast,j^\ast}$ 
and the edge set is a subset of those coming from cubes of the four varieties. 
Since the four varieties are pairwise incompatible by \properref{noshare},
edges coming from different varieties do not touch with each other. 
Each tree component in the subgraph is an isolated point
or consists of two nodes connected by just one edge. 
The number of tree components of the former kind is 
$2|\{(i_0,j)\in V_{i^\ast,j^\ast}\mid {\mathcal I}_{i_0,j}=0\}|$
and that of the latter kind is 
$|\{(i_0,j)\in V_{i^\ast,j^\ast}\mid {\mathcal I}_{i_0,j}=1\}|$.
Hence, the total number of tree components is: 
\begin{align*}
  &2|\{(i_0,j)\in V_{i^\ast,j^\ast}\mid {\mathcal I}_{i_0,j}=0\}|+|\{(i_0,j)\in V_{i^\ast,j^\ast}\mid {\mathcal I}_{i_0,j}=1\}|\\
  &+0|\{(i_0,j)\in V_{i^\ast,j^\ast}\mid {\mathcal I}_{i_0,j}\ge2\}|\\
  = &\sum_{(i_0,j)\in V_{i^\ast,j^\ast}}(2-\min\{{\mathcal I}_{i_0,j},2\})\\
  = &8-\sum_{(i_0,j)\in V_{i^\ast,j^\ast}}\min\{{\mathcal I}_{i_0,j},2\}.
\end{align*} 
By \propref{Ghave},
if ${\mathcal I}_{i^\ast,j^\ast}$ is no less than the above number, 
then we can compose an $(i^\ast,j^\ast)$-solid. 
More generally,
if ${\mathcal I}_{i^\ast,j^\ast}+\sum_{(i_0,j)\in V_{i^\ast,j^\ast}}\min\{{\mathcal I}_{i_0,j},2\}\ge8$ holds in at least one row $i_0\ne i^\ast$,
or analogously, if ${\mathcal I}_{i^\ast,j^\ast}+\sum_{(i,j_0)\in V_{i^\ast,j^\ast}}\min\{{\mathcal I}_{i,j_0},2\}\ge8$ holds
in at least one column $j_0\ne j^\ast$, 
then we can compose an $(i^\ast,j^\ast)$-solid. 
We define $f(i^\ast,j^\ast)$ as follows:
\begin{align}
f(i^\ast,j^\ast)={\mathcal I}_{i^\ast,j^\ast}+\max\{f_\row(i^\ast,j^\ast),f_\col(i^\ast,j^\ast)\},
\label{eq:f}
\end{align}
where $f_\row$ and $f_\col$ are defined as:
\begin{align}
f_\row(i^\ast,j^\ast)&=\max_{i_0\in[6]\setminus\{i^\ast\}}\{\sum_{(i_0,j)\in V_{i^\ast,j^\ast}}\min\{{\mathcal I}_{i_0,j},2\}\},\label{eq:frow}\\
f_\col(i^\ast,j^\ast)&=\max_{j_0\in[6]\setminus\{j^\ast\}}\{\sum_{(i,j_0)\in V_{i^\ast,j^\ast}}\min\{{\mathcal I}_{i,j_0},2\}\}.\label{eq:fcol} 
\end{align}
\begin{cor}
\label{cor:Ghave}
If $f(i^\ast,j^\ast)\ge8$, then an $(i^\ast,j^\ast)$-solid is composable. 
\end{cor}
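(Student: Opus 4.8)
The plan is to read the statement off the structural analysis that immediately precedes it, so the proof will be short. The substantive fact is already in hand: for any fixed row $i_0\ne i^\ast$, consider the sub-instance consisting of all $(i^\ast,j^\ast)$-cubes together with all cubes whose varieties lie in row $i_0$. The associated graph $G_{i^\ast,j^\ast}$ for this sub-instance has node set $T_{i^\ast,j^\ast}$ and its edges come only from the four varieties in row $i_0$ that are compatible with $(i^\ast,j^\ast)$ (exactly four by \properref{noshare}); the $(i^\ast,j^\ast)$-cubes have degree $8$ in the bipartite picture and so contribute no edges. These four varieties are pairwise incompatible (again \properref{noshare}), and each shares exactly two corner triples with $(i^\ast,j^\ast)$ by \propref{corner}; since a common shared triple of two of them would make them compatible, the four pairs of shared triples are disjoint and hence partition the $8$-element set $T_{i^\ast,j^\ast}$. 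Consequently each variety $(i_0,j)$ contributes $2-\min\{{\mathcal I}_{i_0,j},2\}$ tree components (two isolated nodes, one tree edge, or a multi-edge cycle according to whether ${\mathcal I}_{i_0,j}$ is $0$, $1$, or $\ge 2$), so the total number of tree components is $8-\sum_{(i_0,j)\in V_{i^\ast,j^\ast}}\min\{{\mathcal I}_{i_0,j},2\}$.

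First I would make the reduction explicit. By the row/column symmetry of Conway's table and the symmetric roles of $i^\ast$ and $j^\ast$, it suffices to treat the case where the maximum in \eqref{f} is attained by $f_\row$, i.e.\ where ${\mathcal I}_{i^\ast,j^\ast}+f_\row(i^\ast,j^\ast)\ge 8$. Expanding \eqref{frow}, this yields an index $i_0\in[6]\setminus\{i^\ast\}$ with ${\mathcal I}_{i^\ast,j^\ast}+\sum_{(i_0,j)\in V_{i^\ast,j^\ast}}\min\{{\mathcal I}_{i_0,j},2\}\ge 8$, equivalently ${\mathcal I}_{i^\ast,j^\ast}\ge 8-\sum_{(i_0,j)\in V_{i^\ast,j^\ast}}\min\{{\mathcal I}_{i_0,j},2\}$, which is exactly ${\mathcal I}_{i^\ast,j^\ast}$ being at least the tree-component count computed above.

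Next I would invoke \propref{Ghave} for the sub-instance: its number of tree components equals that right-hand side, and the $(i^\ast,j^\ast)$-entry of its matrix is still ${\mathcal I}_{i^\ast,j^\ast}$, so the displayed inequality is precisely the hypothesis of \propref{Ghave}, giving that an $(i^\ast,j^\ast)$-solid is composable from the sub-instance. Since the sub-instance is contained in $I$, an $(i^\ast,j^\ast)$-solid is composable from $I$ as well, which is the claim. (Equivalently, one may work directly with $G_{i^\ast,j^\ast}$ for the full instance $I$ and observe that reinstating the remaining edges onto the row-$i_0$ subgraph can only merge components or create cycles, hence can only decrease the tree-component count, so the same bound survives.)

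I do not anticipate a hard step: the corollary is a bookkeeping consequence of the preceding computation together with \propref{Ghave}, and the definitions \eqref{f}--\eqref{frow} are engineered precisely so that $f(i^\ast,j^\ast)\ge 8$ matches the derived sufficient condition. The only place that warrants a line of care is the assertion that the four relevant pairs of corner triples partition $T_{i^\ast,j^\ast}$; this rests on $|T_{i^\ast,j^\ast}|=8$, the ``four compatible varieties per row'' count from \properref{noshare}, their pairwise incompatibility, and \propref{corner}, after which the component count and the cap at $2$ are immediate.
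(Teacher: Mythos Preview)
Your argument is correct and mirrors the paper's approach: restrict attention to the sub-instance consisting of $(i^\ast,j^\ast)$-cubes together with the cubes in a single row $i_0$ (or column $j_0$), compute the tree-component count of the resulting $G_{i^\ast,j^\ast}$ as $8-\sum_{(i_0,j)\in V_{i^\ast,j^\ast}}\min\{{\mathcal I}_{i_0,j},2\}$, and invoke \propref{Ghave}. You spell out slightly more than the paper does (the partition of $T_{i^\ast,j^\ast}$ by the four disjoint pairs of shared triples, and the monotonicity observation for passing to the full instance), but the route is the same.
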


\section{Answers to \probexist, \probmax\ and \probmin}
\label{sec:answer}

In this section, we present answers to the three problems,
\probexist, \probmax\ and \probmin.
To draw the answers, we utilize the power of a computer search;
we formulate the problems by CP and IP 
and then solve them using computational packages. 

Let us begin with description of CP and IP. 
The CP technique that we use is
{\em Constraint Satisfaction Problem\/} ({\em CSP\/}). 
In a general setting of CSP,
we are given a set of {\em variables\/},
each of which has its {\em domain\/}, 
and a set of {\em constraints\/}
restricting the values that the variables can simultaneously take.
We are asked to assign a value from its domain to every variable,
in such a way that every constraint is satisfied. 
The Global Constraint Catalog~\cite{CATALOG} summarizes
constraint types that are typically used in CSP. 
The key techniques to solve CSP include {\em backtracking\/},
{\em constraint propagation\/}, and so on~\cite{R.2003}. 

Concerning IP, 
we employ {\em Integer Linear Programming\/} ({\em ILP\/}) among various models;
given a linear function of integer variables called an {\em objective\/}
and a set of linear equalities and/or inequalities representing constraints,
we are asked to assign integers to the variables
so that the objective value is minimized (or maximized)
and that, at the same time, all the constraints are satisfied. 
The key techniques to solve ILP
include {\em linear programming\/}, {\em branch-and-bound\/}, {\em cutting plane\/}, and so on~\cite{W.1998}.

The software choice is significant
since it has a great influence on computation time. 
For CSP, 
we use \SUGAR\ (ver.~2.2.1)~\cite{SUGAR} 
that solves a CSP model by means of 
solving a corresponding SAT model.
\SUGAR\ transforms an input CSP model into
an artificial one by preprocessing,
encodes it into a SAT model
by a certain sophisticated algorithm,
and then runs a SAT solver. 
We use \MINISAT\ (ver.~2.2.1)~\cite{MINISAT} as the SAT solver. 
For ILP, we utilize IBM ILOG CPLEX (ver.~12.6)~\cite{CPLEX}.
Both are recognized as excellent software packages. 

The computation was conducted on a workstation 
that carries Intel$^{\textregistered}$ Core$^{\texttrademark}$ i7-4770 Processor (up to 3.90GHz
by means of Turbo Boost Technology)
and 8GB main memory. The installed OS is Ubuntu 14.04.1.

We formulate \probexist\ and \probmax\ by CSP and \probmin\ by ILP. 
CSP and ILP are so robust that
our three problems can be formulated by either model,
but they have different weak points from each other. 
For example, ILP is not very suitable for formulating 
disjunctive constraints
since we are possibly required to introduce 
a large number of 
artificial variables and constraints. 
Problems \probexist\ and \probmax\ contain
disjunctive constraints,
and in our preliminary study,
ILP models for these two problems are not solved
within $10^4$ seconds;
CSP is more suitable for formulating them. 
On the other hand,
\probmin\ does not contain a disjunctive constraint,
and the ILP model takes only a couple of seconds
to find a minimum universal instance,
while the CSP model takes $4.5\times10^3$ seconds.

\subsection{Problem \probexist}
To say no, it suffices to present a variety subset $V$ of $V_\all$
for which a $V$-generatable instance does not exist. 
Based on our preliminary study, 
we conjecture that $V$ consisting of five varieties in the same row
(or column) should be such a subset.

Let us set $V=\{(1,2),(1,3),(1,4),(1,5),(1,6)\}$. 
We formulate the problem of finding a $V$-generatable instance by CSP,
expecting the software \SUGAR\ to say that such an instance does not exist. 
In the CSP model, 
we have 30 non-negative integer variables,
denoted by $x_{1,2},x_{1,3},\dots,x_{6,5}$,
that represent the distribution of cubes over the 30 varieties,
that is, $x_{i,j}={\mathcal I}_{i,j}$. 
The domain of each $x_{i,j}$ is set as follows:
\begin{align}
  \begin{array}{lcl}
    x_{i,j}\in\{0,1,\dots,8\} &\ \ \ \ &\mathrm{if\ }(i,j)\in V,\\
    x_{i,j}\in\{0,1,2\} && \mathrm{otherwise}.
  \end{array}
  \label{eq:csp-dom}
\end{align}
We assume $x_{i,j}\le 8$ since a ninth cube is redundant. 
Furthermore, we assume $x_{i,j}\le 2$ for any $(i,j)\notin V$
since $(i,j)$-cubes can be used to compose an $(i^\ast,j^\ast)$-solid
$((i^\ast,j^\ast)\in V)$ at most twice and thus a third cube is unnecessary. 
The constraints include:
\begin{description}
\item[(I)] The instance size is at least eight. 
\item[(II)] For each composable variety $(i^\ast,j^\ast)\in V$, 
  $|T|\le|I(T)|$ holds for every subset $T$ of $T_{i^\ast,j^\ast}$. 
\item[(III)] For each non-composable variety $(i^\ast,j^\ast)\notin V$,
  $|T|>|I(T)|$ holds for at least one subset $T$ of $T_{i^\ast,j^\ast}$. 
\end{description}
(II) and (III) come from 
the composability condition of \corref{Bhave}.

CSP deals with a constraint that is represented by a linear inequality of variables. 
(I) and (II) can be expressed by a set of linear inequalities;
For (I),
\begin{align}
  &\sum_{(i,j)\in V_\all} x_{i,j}\ge 8. 
  \label{eq:csp-inst}
\end{align}
For (II),
\begin{align}
  \forall (i^\ast,j^\ast)\in V,\ 
  \forall T\subseteq T_{i^\ast,j^\ast},\ \ \ 
  x_{i^\ast,j^\ast}+\sum_{(i,j)\in V_{i^\ast,j^\ast}:\ T_{i,j}\cap T\ne\emptyset} x_{i,j}\ge |T|, 
  \label{eq:csp-comp}
\end{align}
where the left side of the inequality
is equal to $|I(T)|$.

(III) is a disjunctive constraint. 
To express this, we employ OR constraint:
\begin{align}
  \forall (i^\ast,j^\ast)\notin V,\ \ \ 
\bigvee_{T\subseteq T_{i^\ast,j^\ast}}\big(x_{i^\ast,j^\ast}+\sum_{(i,j)\in V_{i^\ast,j^\ast}:\ T_{i,j}\cap T\ne\emptyset} x_{i,j}< |T|\big). 
\end{align}
Furthermore, to increase the computational efficiency,
we reduce the search space
by introducing the following constraint
that is derived from \corref{Ghave}:
\begin{align*}
  \forall (i^\ast,j^\ast)\notin V,\ \ \ 
  f(i^\ast,j^\ast)\le 7. 
\end{align*}
Observing \eqref{f} to \eqref{fcol},
we express the constraint
by a set of linear inequalities and MIN constraints as follows:
\begin{align}
  &\forall (i^\ast,j^\ast)\notin V,\ \forall i_0\in[6]\setminus\{i^\ast\},
  &x_{i^\ast,j^\ast}+\sum_{(i_0,j)\in V_{i^\ast,j^\ast}}\min\{x_{i_0,j},2\}\le 7,\label{eq:csp-frow}\\
  &\forall (i^\ast,j^\ast)\notin V,\ \forall j_0\in[6]\setminus\{j^\ast\},
  &x_{i^\ast,j^\ast}+\sum_{(i,j_0)\in V_{i^\ast,j^\ast}}\min\{x_{i,j_0},2\}\le 7.\label{eq:csp-fcol}
\end{align}

Finally, our primitive CSP model has 
30 integer variables with domains in \eqref{csp-dom}
and constraints from \eqref{csp-inst} to \eqref{csp-fcol}. 
As a part of preprocessing,
\SUGAR\ transforms this na\"ive model
into an artificial one that has $4.5\times10^4$ integer variables,
$6.3\times10^3$ Boolean variables
and $7.5\times10^5$ constraints,
and then encodes it into a SAT model
that has $5.9\times10^5$ variables and $1.1\times10^7$ clauses. 
After $1.6\times10^3$-second computation,
\MINISAT\ decides that the SAT model is not satisfiable,
which means that no $V$-generatable instance exists. 

\begin{theorem}
\label{thm:exist}
There is a subset $V$ of $V_\all$ such that
a $V$-generatable instance does not exist. 
\end{theorem}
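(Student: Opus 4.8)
The plan is to name one explicit subset $V$ and then prove, by a finite (computer-executed) case analysis, that no $V$-generatable instance exists; since the theorem only asserts that \emph{some} such $V$ exists, a single choice suffices. Guided by \properref{noshare}, I would take $V$ to be the five varieties lying in one row of Conway's table, say $V=\{(1,2),(1,3),(1,4),(1,5),(1,6)\}$.

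First I would argue that the question reduces to a finite search. In any candidate instance $I$, a variety $(i,j)\notin V$ can be used at most twice in the construction of any one target solid --- by \propref{corner} each such cube shares exactly two corner triples with a given variety, so only two of its copies can ever be placed at corners of that solid --- hence a third copy is never useful, and no variety ever needs more than eight copies. It therefore suffices to examine matrices $\mathcal I$ with $\mathcal I_{i,j}\le 8$ for $(i,j)\in V$ and $\mathcal I_{i,j}\le 2$ otherwise, of which there are only finitely many.

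Next I would encode $V$-generatability as arithmetic constraints on the entries $\mathcal I_{i,j}$. For each $(i^\ast,j^\ast)\in V$, composability is, by \corref{Bhave}, equivalent to $|T|\le|I(T)|$ for every $T\subseteq T_{i^\ast,j^\ast}$, where $|I(T)|$ equals the linear expression $\mathcal I_{i^\ast,j^\ast}+\sum_{(i,j)\in V_{i^\ast,j^\ast},\,T_{i,j}\cap T\ne\emptyset}\mathcal I_{i,j}$; this gives a finite family of linear inequalities. For each $(i^\ast,j^\ast)\notin V$, non-composability is the \emph{disjunction} that some $T$ violates the corresponding inequality. To keep the search tractable I would additionally impose, for every non-composable variety, the necessary condition $f(i^\ast,j^\ast)\le 7$, which is valid by \corref{Ghave} and is expressible through linear and $\min$ constraints via \eqref{f}--\eqref{fcol}. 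Together with the size requirement $\sum_{(i,j)\in V_\all}\mathcal I_{i,j}\ge 8$, this is a constraint satisfaction problem in $30$ bounded integer variables; handing it to a SAT-based CSP solver (\SUGAR\ driving \MINISAT) and obtaining an UNSAT verdict proves that no $V$-generatable instance exists.

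The main obstacle is that the problem is inherently non-convex: the non-composability requirements are disjunctions ranging over exponentially many subsets $T$ of an eight-element set, so linear-programming relaxations are of no direct help, and the raw search space --- of order $9^5\cdot 3^{25}$ --- is far too large for naive enumeration. Making the search terminate in reasonable time relies on the pruning furnished by \corref{Ghave}, which discards most matrices cheaply, together with the constraint propagation performed after compilation to SAT. If one wished to avoid trusting the solver, one could extract an unsatisfiability certificate from \MINISAT; the only step that genuinely requires verification by hand is the correctness of the reduction described above.
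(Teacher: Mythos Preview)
Your proposal is correct and follows essentially the same approach as the paper: the same choice $V=\{(1,2),\dots,(1,6)\}$, the same domain reductions ($x_{i,j}\le 8$ on $V$ and $x_{i,j}\le 2$ off $V$), the same encoding of (non-)composability via \corref{Bhave} together with the pruning from \corref{Ghave}, and the same \SUGAR/\MINISAT\ pipeline returning UNSAT. The only point you leave slightly implicit is the monotonicity step justifying the domain truncation---removing surplus cubes of a variety outside $V$ cannot create new composable solids---but this is immediate and the paper glosses over it as well.
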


The above argument implies that,
if all five varieties
in a row or a column of Conway's table are composable,
then there is a composable variety outside the row or the column. 
So far, however, we have not gained any insight
into which variety it might be. 

\subsection{Problem \probmax}

It is Berkove et al.~\cite{Berkove.2008}
who first studied \probmax. 
They showed the maximum size of  
infeasible instance to be at least 22,
presenting a concrete example of 
a 22-size infeasible instance as the certificate.
They conjectured the maximum size to be 23,
leaving it open. 

We show by construction
that a 23-size infeasible instance exists. 
We then show that no 24-size infeasible instance exists
by means of CSP.
This answers Berkove et al.'s open problem. 

\begin{theorem} 
  \label{thm:maxinf-lower}
  There is an infeasible instance of size $23$. 
\end{theorem}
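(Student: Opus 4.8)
The statement is an existence claim, so the plan is purely constructive: exhibit one concrete instance $I$ with $|I|=23$ together with a short certificate that it is infeasible. A natural way to find such an $I$ is to start from the $22$-size infeasible instance of Berkove et al.~\cite{Berkove.2008} and try to add a single cube --- there are only $30$ varieties to try, and one may also allow a couple of the $22$ entries to move; the row/column symmetry of Conway's table (\figref{conway}) makes the number of essentially different candidates small, so a short (computer- or hand-) search suffices. Alternatively one can build $I$ from scratch, placing cubes so that every variety is ``blocked.''

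Given a candidate $I$, the infeasibility certificate is cheap, thanks to the composability conditions established above. For every target variety $(i^\ast,j^\ast)\in V_\all$ one must show that no $(i^\ast,j^\ast)$-solid is composable, and by \corref{Bhave} it is enough to exhibit one subset $T\subseteq T_{i^\ast,j^\ast}$ with $|T|>|I(T)|$; equivalently, by \propref{Ghave}, to show that the multi-graph $G_{i^\ast,j^\ast}$ has strictly more tree components than ${\mathcal I}_{i^\ast,j^\ast}$. In the expected construction most of the $30$ cases are settled by the crudest witness: a single corner triple $\tau\in T_{i^\ast,j^\ast}$ carried by no cube of $I$, so that $T=\{\tau\}$ gives $|I(T)|=0<1$; this applies whenever $G_{i^\ast,j^\ast}$ has an isolated node and ${\mathcal I}_{i^\ast,j^\ast}=0$. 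For the remaining varieties one counts tree components of $G_{i^\ast,j^\ast}$ directly, which is easy because compatible cubes lying on a common row or column of Conway's table contribute pairwise non-adjacent edges (\properref{noshare}).

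The real work --- and the main obstacle --- is ensuring that the last cube added (relative to a $22$-size instance) does not accidentally make some variety composable. Adding a cube can only increase each $|I(T)|$, so Hall's condition (\corref{Bhave}) may flip from violated to satisfied: on some graph $G_{i^\ast,j^\ast}$ the new edge could merge two tree components into one, or close a cycle, dropping the component count to at most ${\mathcal I}_{i^\ast,j^\ast}$ and producing an $(i^\ast,j^\ast)$-solid. So the search must avoid any cube whose corner triples would serve as the ``missing edge'' of a would-be solid. I would use the sufficient condition $f(i^\ast,j^\ast)\ge8$ of \corref{Ghave} as a fast pre-filter to discard dangerous candidates, and run the exact tree-component / Hall test only on the survivors.

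Once a candidate passes all $30$ checks, the proof is finished: record the matrix ${\mathcal I}$ explicitly and tabulate, for each $(i^\ast,j^\ast)\in V_\all$, the set $T$ (or isolated corner triple) witnessing non-composability. Every check ranges over a bipartite graph on at most $23+8$ nodes, so the verification is a bounded finite computation that can be independently re-checked by hand.
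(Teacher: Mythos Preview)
Your verification machinery is exactly right and matches the paper: once a candidate $I$ is in hand, one checks each target variety $(i^\ast,j^\ast)$ via \propref{Ghave} by counting tree components of $G_{i^\ast,j^\ast}$, or equivalently exhibits a Hall-violating $T\subseteq T_{i^\ast,j^\ast}$ as in \corref{Bhave}. That part needs no change.

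The gap is that your proposal never actually produces the instance. For an existence theorem the witness \emph{is} the proof, and ``search near Berkove's $22$-instance and tabulate the outcome'' is a plan, not an argument. More to the point, perturbing the known $22$-instance is a detour: the paper's construction does not arise that way, and nothing guarantees that a single added cube (even allowing a few moves) keeps you infeasible.

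The paper takes a much cleaner route that your search heuristic would be unlikely to suggest: put \emph{all} $23$ cubes in a single row of Conway's table, namely
\[
{\mathcal N}_{1,2}={\mathcal N}_{1,3}={\mathcal N}_{1,4}=7,\qquad {\mathcal N}_{1,5}={\mathcal N}_{1,6}=1,
\]
with every other entry zero. The point of concentrating in row~$1$ is that \properref{noshare} does almost all the work for you. For any target $(1,j)$, every cube in $N$ lies in row~$1$ and is therefore incompatible with $(1,j)$, so $G_{1,j}$ is eight isolated nodes while ${\mathcal N}_{1,j}\le 7$; done. For any target not in row~$1$, say $(i^\ast,j^\ast)$ with $i^\ast\ne 1$, exactly four of the five row-$1$ varieties are compatible with it (the mirror $(1,i^\ast)$ is excluded by \properref{mirror}), and again by \properref{noshare} these four contribute pairwise non-adjacent (multi-)edges. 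Thus $G_{i^\ast,j^\ast}$ has four two-node components, at least two of which contain a single edge (coming from the two varieties with count~$1$), hence at least two tree components; but ${\mathcal N}_{i^\ast,j^\ast}=0$. Every case is covered by this one uniform argument.

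So the insight you are missing is structural rather than computational: instead of searching, exploit the row-incompatibility of Conway's table to force $G_{i^\ast,j^\ast}$ to be a disjoint union of two-node pieces for every target outside the chosen row, and then tune the multiplicities $(7,7,7,1,1)$ so that neither the in-row targets (needing ${\mathcal N}_{1,j}\le 7$) nor the out-of-row targets (needing at least one tree component among the four pieces) can be completed.
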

\begin{proof}
  We show that the instance $N$ with the following matrix representation is infeasible:
\begin{align}
{\mathcal N}=\left(
\begin{array}{llllll}
0 & 7 & 7 & 7 & 1 & 1\\
0 & 0 & 0 & 0 & 0 & 0 \\
0 & 0 & 0 & 0 & 0 & 0 \\
0 & 0 & 0 & 0 & 0 & 0 \\
0 & 0 & 0 & 0 & 0 & 0 \\
0 & 0 & 0 & 0 & 0 & 0
\end{array}
\right).
\label{eq:maxinf}
\end{align}
The size of $N$ is exactly 23.
We show
that no $(i,j)$
satisfies the composability condition of \propref{Ghave}. 
Recall Properties~\ref{proper:mirror} and \ref{proper:noshare}
of Conway's table. 
For every $(1,j)$ with $j\in\{2,\dots,6\}$,
the graph $G_{1,j}$ consists of
eight isolated points,
while ${\mathcal N}_{1,j}$ is smaller than eight. 
Hence, a $(1,j)$-solid is not composable. 

For $(i,j)=(2,1)$,
the edges appearing in $G_{2,1}$ 
come from the cubes of the varieties $(1,3),\dots,(1,6)$
since $(1,2)$ is the mirror variety of $(2,1)$.
The edges from the same variety
form multi-edges, and
those coming from different varieties
do not touch with each other
since they are pairwise incompatible. 
There are four connected components in $G_{2,1}$,
each of which consists of two nodes. 
Among these, two connected components
contain only one edge respectively,
which are tree components,
but $N$ has no $(2,1)$-cube. 
Hence, a $(2,1)$-solid is not composable.
The remaining cases can be shown in the same way. 
\end{proof}

\begin{lem}
  \label{lem:maxinf}
  There is no infeasible instance of size $24$ or greater. 
\end{lem}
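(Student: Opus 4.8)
The plan is to establish the upper bound of 23 by an exhaustive computer search using the CSP formulation already developed for Problem \probexist. First I would set up essentially the same CSP model as in the previous subsection, but now with $V=\emptyset$, so that there are no ``composable'' constraints of type (II) at all and the constraint set consists of: the size bound $\sum_{(i,j)\in V_\all} x_{i,j}\ge 24$ (replacing \eqref{csp-inst}), together with the non-composability constraint (III) for \emph{every} variety $(i^\ast,j^\ast)\in V_\all$, expressed via the OR constraint over subsets $T\subseteq T_{i^\ast,j^\ast}$. Since every variety is now required to be non-composable, the domain reduction used before needs adjustment: the argument ``a third $(i,j)$-cube is unnecessary because it can help compose an $(i^\ast,j^\ast)$-solid at most twice'' no longer restricts anything to $V$, so in fact \emph{every} variable may be capped at $x_{i,j}\le 2$ once we know (from \corref{corner}) that any compatible cube contributes at most two corner triples to any target solid, and $x_{i^\ast,j^\ast}\le 7$ for the diagonal-role count — but since infeasibility is monotone (shrinking an instance preserves infeasibility), it suffices to search for a minimal-size-$24$ witness, and one can safely cap each $x_{i,j}$ at a small constant such as $8$ or even lower. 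I would also carry over the search-space reduction from \corref{Ghave}, namely $f(i^\ast,j^\ast)\le 7$ for all $(i^\ast,j^\ast)$, expressed through \eqref{csp-frow}--\eqref{csp-fcol} but now ranging over all varieties.

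The key steps, in order, are: (1) instantiate the CSP with the size lower bound $24$ and the full family of non-composability constraints; (2) feed it through \SUGAR\ to obtain a SAT encoding; (3) run \MINISAT\ and observe that the instance is unsatisfiable; (4) conclude that no size-$24$ instance can be infeasible, and hence (by monotonicity of infeasibility under adding cubes) no instance of size $24$ or greater is infeasible either. Step (4) deserves an explicit remark: if some instance of size $25$ were infeasible, then deleting one arbitrary cube yields a size-$24$ instance which is still infeasible (removing cubes cannot create new matchings in any $B_{i,j}$), contradicting the search result; so proving the $24$ case settles all larger sizes at once.

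The main obstacle I anticipate is purely computational rather than conceptual: the OR constraints in (III), quantified over all $30$ varieties and over exponentially many subsets $T$ of each $8$-element set $T_{i^\ast,j^\ast}$, blow up the encoding considerably. In the \probexist\ run only $25$ varieties needed type-(III) constraints; here all $30$ do, so the SAT instance will be larger, and there is a real risk \MINISAT\ does not terminate quickly. The mitigation is to prune aggressively before encoding: use \corref{Bhave}'s contrapositive but restrict to the ``critical'' subsets $T$ (those $T$ for which no proper subset already witnesses $|T'|>|I(T')|$), exploit the symmetry group of Conway's table (the automorphisms permuting rows/columns and the mirror involution) to fix, say, the location of a maximum-multiplicity variety, and tighten the per-variable upper bounds as far as the combinatorics of corner triples allows. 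If the direct CSP still proves intractable, a fallback is a case analysis on the multiset of row/column sums of ${\mathcal I}$, using \corref{Ghave} to eliminate most cases by hand and invoking the solver only on the residual cases — but I would expect the straightforward CSP, with the reductions above, to finish within a comparable time budget to the \probexist\ computation.
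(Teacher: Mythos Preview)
Your proposal is correct and takes essentially the same approach as the paper: formulate the search for a size-$24$ infeasible instance as a CSP with $V=\emptyset$, impose the non-composability OR-constraints (III) for all $30$ varieties together with the search-space cuts \eqref{csp-frow}--\eqref{csp-fcol}, hand it to \SUGAR/\MINISAT, and report unsatisfiability. The only cosmetic differences are that the paper fixes the size with the equality $\sum_{(i,j)} x_{i,j}=24$ rather than your inequality $\ge 24$, and the paper dispatches the $\ge 25$ case with a one-line ``clearly'' where you spell out the monotonicity argument of step~(4).
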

\begin{proof}
Setting $V=\emptyset$,
we show that
a 24-size $V$-generatable instance does not exist by utilizing CSP.  
Here we do not use the inequality constraint \eqref{csp-inst}
that gives a lower bound on the instance size
but an equality constraint that fixes the size to 24:
\begin{align}
\sum_{(i,j)\in V_\all} x_{i,j}=24.
\label{eq:csp-24}
\end{align}
The CSP model in this case
has 30 integer variables with domains in \eqref{csp-dom}
and constraints from \eqref{csp-comp} to \eqref{csp-fcol} and \eqref{csp-24}. 
We run \SUGAR\ to solve the model;
the preprocessed CSP model
has $7.4\times10^4$ integer variables,
$7.7\times10^3$ Boolean variables,
and $1.3\times10^5$ constraints. 
It is then encoded into a SAT model
that has $1.8\times10^6$ variables and $4.2\times10^7$ clauses. 
After $4.4\times10^3$-second computation,
\MINISAT\ reports that no solution exists,
meaning that no 24-size infeasible instance exists. 
Clearly, any larger instance 
is not infeasible. 
\end{proof}

\begin{theorem}
  \label{thm:maxinf}
  The maximum size of infeasible instance is $23$.
  The instance with the matrix representation in $\eqref{maxinf}$
  is an example. 
\end{theorem}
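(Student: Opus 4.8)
The plan is simply to combine the two complementary bounds that have just been established. By \thmref{maxinf-lower}, the explicit instance $N$ with matrix representation $\eqref{maxinf}$ is infeasible, and its size is $7+7+7+1+1=23$; hence the maximum size of an infeasible instance is at least $23$. By \lemref{maxinf}, there is no infeasible instance of size $24$ or greater; hence the maximum size is at most $23$. Putting the two bounds together, the maximum size is exactly $23$, and $N$ realizes it, which is precisely the assertion of the theorem (including the claim that the instance in $\eqref{maxinf}$ is an example).

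A small point worth spelling out is that the two halves really do meet with no gap: \lemref{maxinf} excludes every size in $\{24,25,\dots\}$ while \thmref{maxinf-lower} attains size $23$, so no intermediate value remains to be treated and no separate argument for general $m\ge 24$ is needed — \lemref{maxinf} already phrases the exclusion as ``$24$ or greater.'' Thus the final step is a one-line assembly, and there is no residual case analysis.

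The genuine difficulty lies not in this assembly but in its two ingredients, both already proved. For the lower bound, the content is the structural verification that none of the $30$ varieties satisfies the composability condition of \propref{Ghave} for the specific instance $N$; this is the row/column argument in the proof of \thmref{maxinf-lower}, using Properties~\ref{proper:mirror} and \ref{proper:noshare} to see that every $G_{i,j}$ for this $N$ has more tree components than the available count ${\mathcal N}_{i,j}$. For the upper bound, the content is the CSP formulation (domains $\eqref{csp-dom}$, the composability constraints $\eqref{csp-comp}$--$\eqref{csp-fcol}$ derived from \corref{Bhave} and \corref{Ghave}, and the size equation $\eqref{csp-24}$) together with the solver's \textsc{unsat} verdict. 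As in any computer-assisted proof, the real obstacle is confidence in the encoding and in the search; this could be reinforced by re-running with an independent solver or by extracting and checking an \textsc{unsat} certificate, but within the paper's framework the verdict is taken as given, so nothing further is required here.
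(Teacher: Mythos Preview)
Your proposal is correct and matches the paper's approach exactly: the paper states \thmref{maxinf} immediately after \thmref{maxinf-lower} and \lemref{maxinf} with no separate proof, treating it as the obvious combination of the lower bound (the explicit $23$-cube infeasible instance) and the upper bound (no infeasible instance of size $\ge 24$). Your additional commentary on where the real work lies is accurate but goes beyond what the paper records for this particular theorem.
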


\invis{
In our preliminary study,
we develop an algorithm that enumerates all 24-size instances
to find an infeasible one.  
Of course, the result is the same as above;
the program verifies that no such instance exists. 
}

\subsection{Problem \probmin}

First, we give a lower bound on the size of universal instance. 
We then show that the bound is tight 
by presenting a universal instance of that size. 

\begin{lem}
  \label{lem:minuni-lower}
  A universal instance should contain at least $12$ cubes. 
\end{lem}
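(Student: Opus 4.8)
The plan is to find a single variety $(i^\ast,j^\ast)$ whose $(i^\ast,j^\ast)$-solid is hard to compose, and to argue that composing it already forces at least $12$ cubes into the instance. Since a universal instance must be able to compose \emph{every} variety, in particular it must compose this one. First I would invoke Proposition~\ref{prop:Ghave}: an $(i^\ast,j^\ast)$-solid is composable iff ${\mathcal I}_{i^\ast,j^\ast}$ is at least the number of tree components of $G_{i^\ast,j^\ast}$. The graph $G_{i^\ast,j^\ast}$ has the fixed $8$-node set $T_{i^\ast,j^\ast}$, and its edges correspond to the compatible cubes in the instance (those of varieties in $V_{i^\ast,j^\ast}$), with multi-edges allowed. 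So composability forces
\[
{\mathcal I}_{i^\ast,j^\ast} + (\text{number of tree components of } G_{i^\ast,j^\ast}) \ge {\mathcal I}_{i^\ast,j^\ast} + k \ge 8
\]
only in the extreme case $k = 8 - {\mathcal I}_{i^\ast,j^\ast}$; more usefully, the total cube count contributing to this one variety is ${\mathcal I}_{i^\ast,j^\ast}$ plus the number of edges of $G_{i^\ast,j^\ast}$, and I want to lower-bound that sum.

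The key step is a counting argument on $G_{i^\ast,j^\ast}$. Let $e$ be the number of edges and $k$ the number of tree components. A graph on $8$ nodes with $c$ connected components satisfies $e \ge 8 - c$, with equality exactly when every component is a tree; and a non-tree component carries at least one extra edge. Writing $k$ for the number of tree components and noting $c \le k + (\text{number of non-tree components}) \le k + (e - (8-c))$ after rearrangement, one extracts a bound of the form $e + k \ge 8$, hence ${\mathcal I}_{i^\ast,j^\ast} + e \ge {\mathcal I}_{i^\ast,j^\ast} + (8 - k) \ge 8$ using composability ${\mathcal I}_{i^\ast,j^\ast} \ge k$. This shows that just the cubes \emph{relevant to one variety} number at least $8$ --- not yet $12$. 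To push to $12$, I would instead consider two (or more) varieties simultaneously whose relevant cube sets are forced to overlap only a little: for instance a variety and its mirror, or two varieties sharing no corner triples, so that a cube compatible with one is constrained in how much it can also help the other. Using Property~\ref{proper:noshare} and the structure of $\bar V_{i,j}$ recorded after Property~\ref{proper:noshare}, I would pick varieties arranged so that a cube of any fixed variety is compatible with at most a bounded number of them, then double-count: $\sum_{(i^\ast,j^\ast)} (\text{relevant cube count}) \ge 8 \cdot (\#\text{chosen varieties})$, while each cube in the instance contributes to at most (bounded) many terms, yielding $|I| \ge 12$.

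The main obstacle is choosing the right family of varieties and getting the overlap bookkeeping tight enough to reach $12$ rather than a weaker number like $8$ or $10$. A cube of variety $(i,j)$ is compatible with $20$ other varieties (Proposition~\ref{prop:share}) and equals one variety, so it is ``relevant'' to $21$ of the $30$ varieties --- too many for a naive global average to give $12$. The fix is to restrict attention to a cleverly chosen small set $W \subseteq V_\all$ (I expect $|W| = 3$, e.g. three mutually incompatible varieties, or a variety together with two of its mirrors-of-rotations) on which the relevance structure is sparse: by Property~\ref{proper:noshare}, varieties in a common row are pairwise incompatible, so a single cube can be relevant to few of them, and the per-variety lower bound of $8$ then adds up across $W$ with little double-counting. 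Verifying the exact compatibility counts for the chosen $W$ --- which cube varieties are relevant to how many members of $W$ --- is the routine-but-delicate part; once that table is in hand, the inequality $|I| \ge 12$ drops out, and tightness is deferred to the construction that follows.
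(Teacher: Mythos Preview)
You have assembled every ingredient of the paper's proof and then talked yourself out of using it. The paper's argument is precisely the ``naive global average'' you dismiss. For each of the $30$ varieties one has ${\mathcal U}_{i,j}+|E_{i,j}|\ge 8$; your tree-component derivation of this is correct but heavier than needed, since it is simply the case $T=T_{i,j}$ of \corref{Bhave}: only $(i,j)$-cubes and cubes of compatible varieties lie in $I(T_{i,j})$, so $|I(T_{i,j})|={\mathcal U}_{i,j}+|E_{i,j}|\ge |T_{i,j}|=8$. Summing over all $30$ varieties gives
\[
\sum_{(i,j)\in V_\all}\bigl({\mathcal U}_{i,j}+|E_{i,j}|\bigr)\ \ge\ 30\cdot 8\ =\ 240.
\]
You then correctly observe that each cube contributes to exactly $21$ terms of this sum (once as ${\mathcal U}_{i',j'}$, and $20$ times as an edge via \propref{share}). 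Hence $21\,|U|\ge 240$, so $|U|\ge 240/21>11$ and therefore $|U|\ge 12$.

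The only flaw is the sentence ``too many for a naive global average to give $12$'': you never carried out the division. Since $240/21\approx 11.43$, integrality yields exactly $12$. The proposed detour through a small subset $W\subseteq V_\all$ is unnecessary, and as you yourself note would require case analysis on compatibility counts that you have not supplied; there is no need to go there.
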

\begin{proof}
  Let $U$ be a universal instance. 
  For every variety $(i,j)$ in $V_\all$,
  since an $(i,j)$-solid is composable,
  the number $|E_{i,j}|+{\mathcal U}_{i,j}$
  should be at least eight. 
  The sum of this number over the 30 varieties
  is at least $30\times8=240$. 
  On the other hand,
  by \propref{share}, 
  each cube in $U$
  contributes to the sum in exactly $21$ varieties;
  denoting the variety of the cube by $(i',j')$,
  once for ${\mathcal U}_{i',j'}$,
  and 20 times for an edge in $G_{i,j}$ such that $(i',j')\in V_{i,j}$. 
  Hence we have $|U|\ge 240/21>11$. 
\end{proof}

\begin{theorem}
  \label{thm:minuni}
  The minimum size of universal instance is $12$.
  The instance $U$ 
  with the following matrix representation 
  is an example$:$
\begin{align*}
{\mathcal U}=
\left(
\begin{array}{llllll}
0 & 1 & 1 & 0 & 0 & 0\\
1 & 0 & 1 & 0 & 0 & 0 \\
1 & 1 & 0 & 0 & 0 & 0 \\
0 & 0 & 0 & 0 & 1 & 1 \\
0 & 0 & 0 & 1 & 0 & 1 \\
0 & 0 & 0 & 1 & 1 & 0
\end{array}
\right).
\end{align*}
\end{theorem}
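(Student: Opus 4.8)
The lower bound $|U|\ge 12$ is exactly \lemref{minuni-lower}, so the entire content of the theorem is the assertion that the displayed instance $U$, which has $|U|=12$ by inspection of $\mathcal{U}$, is universal, i.e.\ that an $(i,j)$-solid is composable from $U$ for every $(i,j)\in V_\all$. The plan is to verify this for each of the $30$ varieties by means of \propref{Ghave}: for a fixed $(i,j)$ one forms the multi-graph $G_{i,j}=(T_{i,j},E_{i,j})$, whose edges are the cubes of $U$ compatible with $(i,j)$, and checks that its number of tree components is at most $\mathcal{U}_{i,j}$. Note that \corref{Ghave} is useless here: since $U$ contains at most one cube of each variety, $f_\row$ and $f_\col$ never exceed $4$, so $f(i,j)\le 1+4<8$ for every variety and the cheap sufficient condition always fails; the tree-component count must really be carried out.

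\smallskip\noindent
First I would reduce the number of cases by symmetry. The matrix $\mathcal{U}$ is symmetric, so $U$ is setwise fixed by the mirror operation (\properref{mirror}); since an $(i,j)$-solid is composable from an instance exactly when a $(j,i)$-solid is composable from its mirror image, it suffices to treat one variety out of each mirror pair $\{(i,j),(j,i)\}$. Moreover $U$ is, by construction, invariant under every index permutation preserving each of the blocks $\{1,2,3\}$ and $\{4,5,6\}$ and under the interchange of the two blocks; these are all symmetries of the puzzle, and together with the mirror they reduce the problem to two representative varieties: one with both indices inside one block, say $(1,2)$, and one with indices in different blocks, say $(1,4)$. (If one prefers not to invoke the block symmetries, the same computation is simply carried out for all $15$ mirror representatives.)

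\smallskip\noindent
For $(1,2)$ I would use the description of $\bar V_{1,2}$ coming from Properties \ref{proper:mirror} and \ref{proper:noshare} to list the cubes of $U$ incompatible with $(1,2)$ — these are exactly the $(2,1)$-, $(1,3)$- and $(3,2)$-cubes — so that, together with the $(1,2)$-cube itself, precisely $8$ cubes of $U$ remain and appear as $8$ edges of $G_{1,2}$ on the $8$ nodes of $T_{1,2}$; using the explicit corner triples (as tabulated for $T_{1,2},T_{2,1}$ in the paper, and obtained the same way for the other varieties) one checks that these $8$ edges leave at most one tree component, matching $\mathcal{U}_{1,2}=1$. For $(1,4)$ the analogous count again leaves exactly $8$ compatible cubes of $U$, hence $8$ edges on the $8$ nodes of $T_{1,4}$; but now $\mathcal{U}_{1,4}=0$, so one must verify that $G_{1,4}$ has \emph{no} tree component at all. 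Since there are $8$ edges on $8$ vertices and no loops, this is equivalent to every connected component of $G_{1,4}$ being unicyclic — in particular every corner triple of $T_{1,4}$ must be covered and the compatible cubes must chain together into cycles — which is once more a finite check against the corner-triple lists.

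\smallskip\noindent
The main obstacle is exactly this last point, and it is the only part of the argument with any real substance. For the $18$ cross-block varieties $U$ contains no cube of that variety itself, so composability is fragile: a single uncovered corner triple, or a component that turns out to be a path rather than a cycle, already kills it, and none of the coarse counting arguments of \secref{prel} can detect the difference. The heart of the proof is therefore the explicit verification that the $8$ cubes of $U$ compatible with $(1,4)$ induce a disjoint union of cycles on $T_{1,4}$ (whence the same holds for all cross-block varieties by symmetry); everything else is bookkeeping. As a cross-check, this is precisely the kind of statement that the CSP/ILP machinery used elsewhere in this section confirms automatically for the single fixed instance $U$.
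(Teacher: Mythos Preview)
Your proposal is correct and follows exactly the approach of the paper, whose proof simply notes that $|U|=12$ and that ``one can verify the universality by checking that Proposition~\ref{prop:Bhave} or \ref{prop:Ghave} holds for all the 30 varieties,'' leaving the actual check to the reader. Your write-up fleshes this out with a symmetry reduction and a concrete plan for the two representative cases; the block-permutation symmetry of Conway's table you invoke is not established anywhere in the paper, but your stated fallback to the $15$ mirror representatives (justified by \properref{mirror} and the symmetry of $\mathcal{U}$) keeps the argument self-contained.
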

\begin{proof}
The size of $U$ is exactly 12. 
One can verify the universality by 
checking that Proposition~\ref{prop:Bhave} or \ref{prop:Ghave} holds
for all the 30 varieties. 
\end{proof}

We find the minimum universal instance above by
solving the following ILP. 
\begin{align*}
  \begin{array}{lcl}
    \textbf{minimize} &\ \ \ & \sum_{(i,j)\in V_\all} x_{i,j}\\
    \textbf{subject\ to}&& \eqref{csp-comp}\textrm{\ with\ respect\ to\ }V=V_\all\\
    &&\forall(i,j)\in V_\all,\ \ 0\le x_{i,j}\le 8\ \mathrm{and}\ x_{i,j}\in\mathbb{Z}
  \end{array}
\end{align*}
This ILP model has 30 integer variables, $x_{1,2},x_{1,3},\dots,x_{6,5}$,
and $7.7\times10^3$ inequality constraints,
and does not contain the disjunctive constraint (III). 
IBM ILOG CPLEX solves it within a couple of seconds,
providing $\mathcal U$ as output.

\section{Concluding Remarks}
\label{sec:conc}

We solved three problems 
on a generalization of ``Eight Blocks to Madness'' puzzle,
i.e., \probexist, \probmax\ and \probmin,
with the help of a computer search. 
The answers are summarized as Theorems~\ref{thm:exist},
\ref{thm:maxinf} and \ref{thm:minuni}, respectively. 
In particular, \probmax\ 
answers an open problem in \cite{Berkove.2008}.

Smarter proofs are expected,
especially for \probexist\ and \probmax,
since the proofs are computer-assisted;
we admit that ``computer-assisted proof'' is 
a controversial proof technique~\cite{M.2001}.

We leave two open problems. 

\begin{description}
\item[\underline{Open Problem 1}] Characterize subsets $V$ of $V_\all$
  for which a $V$-generatable instance exists.
\end{description}
We have seen that a $V$-generatable instance exists
if $V=\emptyset$ or $V_\all$,
whereas it does not exist if $V$ is a set of five varieties
in a single row (or column) of Conway's table (\thmref{exist}). 
However, we are not sure
what discriminates variety subsets that have generatable instances
and ones that do not have generatable instances. 
\invis{
Clearly a $V$-generatable instance exists when $V=\emptyset$ or $V_\all$. 
On the other hand, it does not exist
if $V$ is a set of 5 varieties in a single row (or column) of Conway's table. 
We found this $V$ by ad-hoc search, 
but are not sure what discriminates $V$'-s that have generatable instances
and ones that do not have generatable instances. 
}

\begin{description}
\item[\underline{Open Problem 2}]
  Characterize subsets $V$ of $V_\all$
  for which every $V$-generatable instance is finite. 
\end{description}
This problem is restricted to $V$
for which a $V$-generatable instance exists. 
Here we do not restrict ${\mathcal I}_{i,j}$ to be no more than eight but permit it to be infinite. 
When $V=\emptyset$, all $V$-generatable instances are finite
because the sizes are no more than 23 (\thmref{maxinf}). 
On the other hand, when $V=V_\all$ or $|V|=1$,
there exists a $V$-generatable instance of infinite size. 
It is interesting to explore what
determines the finiteness of $V$-generatable instance. 

Although being quite simple and classical,
colored cube puzzles still provide us with many mathematical problems. 
We hope 
this paper forms a basis of additional
research in recreational mathematics.

\section*{Acknowledgments}
We gratefully acknowledge very careful and detailed comments given by anonymous reviewers.
A large part of this work was accomplished 
when the author worked for Ishinomaki Senshu University in Japan.
He is deeply grateful to Professor Akira Maruoka
for his proofreading,
enthusiastic guidance on technical writing
and persistent encouragement. 
The author would also like to thank
undergraduate students 
Masaki Ishigaki, Kouta Kakizaki and Kousuke Sakai
for working on preliminary studies. 
This work is partially supported by JSPS KAKENHI Grant Number 25870661.


\bibliographystyle{plain}
\bibliography{bibcube}

\end{document}